\theoremstyle{plain}
\newtheorem{assumption}{Assumption}
\newtheorem{theorem}{Theorem}
\newtheorem{lemma}{Lemma}
\theoremstyle{remark}
\renewcommand{\Pr}{\mathbb{P}}
\newcommand{\rtheta}{\theta^{(r)}}
\newcommand{\one}{\mathbb{1}}
\crefname{figure}{figure}{figures}
\crefname{equation}{equation}{equations}
\crefname{lemma}{lemma}{lemmas}
\crefname{proposition}{proposition}{propositions}
\crefname{corollary}{corollary}{corollaries}
\crefname{condition}{condition}{conditions}
\crefname{assumption}{assumption}{assumptions}
\crefname{remark}{remark}{remarks}
\crefname{appendix}{appendix}{appendices}
\begin{document}

\renewcommand{\thefootnote}{\fnsymbol{footnote}}

\begin{center}
\Large\textsc{Bounding the Effect of Persuasion with Monotonicity Assumptions: Reassessing the Impact of TV Debates}\footnote{We are grateful to Vincent Pons, the editor, and three anonymous referees  for their encouraging and insightful comments.}
\vspace{1ex}
\end{center}

\vspace{.3cm}
\begin{center}
    \begin{tabular}{ccc}
        \large\textsc{Sung Jae Jun}\footnote{Department of Economics, suj14@psu.edu.} 
        &
        &
        \large\textsc{Sokbae Lee}\footnote{Department of Economics, sl3841@columbia.edu.}
        \\
        Pennsylvania State University
        &
        &
        Columbia University
    \end{tabular}
\end{center}

\begin{center}
\vspace{0.3cm} 
    February 2026 
\vspace{0.3cm}
\end{center}

\noindent
\textbf{Abstract.}
%
%
%

Televised debates between presidential candidates are often regarded as the exemplar of persuasive communication. Yet, recent evidence from \citet{TVdebate:23} indicates that they may not sway voters as strongly as popular belief suggests. We revisit their findings through the lens of the persuasion rate and introduce a robust framework that does not require exogenous treatment, parallel trends, or credible instruments. Instead, we leverage plausible monotonicity assumptions to partially identify the persuasion rate and related parameters. Our results reaffirm that the sharp upper bounds on the persuasive effects of TV debates remain modest. 


\bigskip
\noindent\textbf{Key Words: }  
Persuasion rates; Monotonicity assumptions; Partial identification;  Media; TV debates \\
\bigskip
\noindent\textbf{JEL Classification Codes:} C21, D72



\raggedbottom

\clearpage

\section{Introduction}\label{section:intro}

Economists have extensively documented empirical evidence demonstrating how persuasive efforts can affect the decisions and actions of consumers, voters, donors, and investors \citep[see][for a survey of the early literature]{dellavigna2010persuasion}. Indeed, the art of persuasion has captivated observers for centuries, from ancient rhetoricians to modern communication experts. In today's media-driven world, few events draw more attention than a live TV debate between U.S. presidential candidates, often billed as a make-or-break moment that can sway millions of undecided voters. However, recent evidence from \citet[][LP hereafter]{TVdebate:23} challenges this common belief. Their large-scale study examines 56 TV debates across 31 elections in seven countries, and they find no discernible impact of those debates on vote choices, beliefs, or policy preferences. 

In this paper, we revisit LP’s findings to reassess the effectiveness of TV debates from a different, robust perspective. Our framework focuses on the persuasion rate and related parameters, yet does not rely on linearity, additive separability, exogenous treatments, parallel trends, or credible instruments. Instead, it draws on a pair of monotonicity assumptions in the spirit of \citet{manski2000monotone}. Because the sharp lower bounds are always trivially zero, we concentrate on the sharp upper bounds on the persuasion rate and other measures. Our results reaffirm LP’s conclusion in that these upper bounds remain small, implying that the overall effects of TV debates appear limited.

The persuasion rate, first introduced by \citet{dellavigna2007fox} to evaluate the efficacy of persuasive efforts, has been extensively used to make a comparison across diverse settings. Building upon this concept, \citet{jun2023identifying} formulate the persuasion rate as a causal parameter that quantifies how a persuasive message affects recipients' behavior, and they conduct a formal identification analysis based on exogenous treatment assignments or credible instruments.  Extending this line of research, \citet{jun2024aprt} propose the forward and backward average persuasion rates on the treated as additional causal parameters of interest, and they develop a difference-in-differences (DID) framework to identify, estimate, and conduct inference for them.
Additional research on persuasion rates includes an analysis of persuasion types among compliers \citep{yu2023binary}, identification of the average persuasion rate under sample selection \citep{possebom2022probability}, an extension of the persuasion effect to continuous decisions \citep{kaji2023assessing},
and covariate-assisted bounds on the persuasion rate without monotonicity assumptions \citep{ji2023model}.

In this paper, we explore implications of two monotonicity assumptions: monotone treatment response \citep[][MTR]{manski1997} and monotone treatment selection \citep[][MTS]{manski2000monotone}. The MTR assumption states that the effect of the informational treatment of interest is directional, and it has also been employed by \citet{jun2023identifying,jun2024aprt} in the context of persuasion rates. In our application to TV debates, MTR implies that exposure to the debate never increases uncertainty regarding vote intentions: i.e., debates provide information that helps voters crystallize their preferences. For identifying the persuasion rate, MTR links the joint probability of two potential outcomes with the average treatment effect. The MTS assumption is new in studying the persuasion rate. In our context, it requires that the probability of vote choice consistency (voting exactly as intended at the time of the pre-election survey) not decrease as the survey date approaches election day, which seems natural.

Since the foundational work of \citet{manski1997} and \citet{manski2000monotone}, employing  MTR, MTS, and other types of monotonicity assumptions has been fruitful in the literature on partial identification. To sample a few, see
\citet{Kreider2007JASA},
\citet{vytlacil2007dummy},
\citet{BSV-2012},
\citet{Kreider-et-al:2012},
\citet{okumura2014concave},
\citet{Kim-et-al:2018},
\citet{MSV-2019},
and
\citet{Jun2024JBES}
among others.
Our paper is another example that shows the usefulness of the MTR and MTS assumptions.

It is worth noting that using MTR and MTS does not alter the estimands one would compute under MTR and exogenous treatments. What changes is how we interpret these estimands: the population quantities that point-identify the persuasion rate and its reverse version under MTR with exogenous treatments become the sharp upper bounds on these same measures when the exogenous treatment assumption is replaced with MTS. We illustrate these findings by reanalyzing LP’s data.

This robust interpretation of our estimands can be extended further. Both the persuasion rate and its reverse version are defined by conditioning on a counterfactual event. However, \citet{pearl1999probabilities,pearl2009causality,Pearl2015causes} has advocated conditioning on an observable event rather than a counterfactual one. We show that this conceptual difference is inconsequential in our approach: our estimands serve as sharp upper bounds not only on the persuasion rate and its reverse version, but also on the probabilities of causation proposed by Pearl, when we solely rely on MTR and MTS for identification. 

Our empirical findings suggest that watching TV debates has, at most, a modest impact on vote choice consistency: the estimated average treatment effect is no higher than 0.7\%, while the persuasion rate is about 3\% at most. This implies that relatively few voters switch from an inconsistent to a consistent vote choice by watching debates. Our results align with those of LP, even though our methodology is quite different.
We also remark that the upper bound on the persuasion rate is larger than that on the average treatment effect because, under the MTR assumption, the persuasion rate is derived by rescaling the average treatment effect. The rescaling factor specifically avoids ``preaching to the converted'' by conditioning on relevant events. However, even by this alternative measure, the effect of TV debates remains limited.

The rest of the paper is organized as follows. In \Cref{sec:prelim}, we introduce the causal persuasion rate and related parameters, and discuss the MTR assumption, which will be used throughout. \Cref{sec:id} focuses on identification, highlighting that the estimands that point-identify the parameters of interest under exogenous treatments become sharp upper bounds on those parameters when the MTS assumption is imposed. In \Cref{sec:pc}, we consider \citet{pearl1999probabilities}’s probabilities of causation and show that the same sharp upper bounds apply under MTR and MTS. In \Cref{sec:eg}, we apply our methodology to revisit LP’s data. In \Cref{sec:conclusions}, we provide concluding remarks. Finally, in the appendix, we give the proofs of our main theoretical results.

\section{Preliminaries}\label{sec:prelim}

For $d\in \{0,1\}$, let $Y(d)$ denote a binary potential outcome, $D$ the observed binary treatment, and therefore, the observed outcome is $Y = DY(1) + (1-D) Y(0)$. 
Because our identification results do not rely on covariates, we focus on the case without them. Alternatively, one can view our assumptions and identification results as ``implicitly conditional on covariates.'' The causal parameters we consider are: 
\begin{align*} 
\theta =  \Pr\{ Y(1) = 1\mid Y(0) = 0 \} 
\ \ \text{ and } \ \
\rtheta = \Pr\{ Y(0) = 0 \mid Y(1) = 1 \}.
\end{align*}
The conditional probability $\theta$ is known as the average persuasion rate (APR), whereas $\rtheta$ is its reverse version (R-APR). Here, $\theta$ measures the proportion of individuals, among those who would not have taken the action of interest in the absence the treatment, who would have changed their behavior in the presence of the treatment. The reverse version $\rtheta$ focuses on individuals who would have taken the action of interest with the treatment, and it answers how many of them would not have done so without the treatment. See 
\citet{dellavigna2007fox}, \citet{dellavigna2010persuasion}, and 
\citet{jun2023identifying,jun2024aprt} for more discussion on these and related parameters.

We start with making two assumptions. 

\begin{assumption}[Regularity] \label{ass:overlap}
For all $(y,d) \in \{0,1\}^2$, $0 < \Pr\{ Y = y, D= d\} < 1$.
\end{assumption}

\begin{assumption}[Monotone Treatment Response (MTR)]\label{ass:mtr}
We have $Y(1) \geq Y(0)$ with probability one. 
\end{assumption}

\Cref{ass:overlap} imposes weak regularity conditions on the observed data. \Cref{ass:mtr} has been used for partial identification \citep[e.g.][]{manski1997,manski2000monotone,Jun2024JBES} or in the context of informational treatments \citep{jun2023identifying,jun2024aprt}. In the latter case, it is often interpreted as an assumption that says, ``Persuasive messages are directional, and there is no backlash.''

In our application, \Cref{ass:mtr} is motivated by LP’s research design, in which respondents are surveyed twice: first to elicit pre-election voting intentions, and later to record their final post-election choices. The treatment indicator $D$ equals $1$ if the initial survey occurs after a TV debate and $0$ otherwise. The outcome variable $Y$, referred to as \emph{vote choice consistency}, equals $1$ if a respondent’s pre-election intention matches their eventual vote, and $0$ otherwise; notably, $Y=0$ whenever the initial response is ``undecided.'' Formally, let $Y(1)$ and $Y(0)$ represent vote choice consistency with and without a debate, respectively. Panel A of \Cref{fig:combined_methodology} illustrates the coding of these potential outcomes.

In this context, \Cref{ass:mtr} means that a debate may help clarify a voter's preference but should not confuse a previously decided voter. For instance, Scenario 1 in Panel A depicts a respondent who is undecided without the debate ($Y(0)=0$) but forms a consistent preference for Candidate A after watching it ($Y(1)=1$), a case that satisfies MTR. In contrast, Scenario 2 illustrates the reverse case, where a respondent who would have been consistent without the debate becomes undecided (and thus inconsistent) after watching it. We consider this ``confusion'' scenario implausible if debates serve to provide information and clarify preferences; therefore, MTR excludes this possibility. Finally, Scenarios 3 and 4 show examples where $Y(0) = Y(1) = 0$, which are consistent with MTR but yield no variation in the outcome.

It is worth noting that the plausibility of the MTR assumption depends heavily on the definitions of the treatment and outcome variables. For instance, \citet{cavgias2024media} analyze the 1989 Brazilian election, where the treatment is defined as having \emph{access} to a biased TV debate. In that context, the debate, which was widely regarded as biased towards the right-wing candidate, could persuade a voter to switch candidates (e.g., from Lula to Collor). Consequently, a treated voter would appear inconsistent ($Y(1)=0$) relative to their pre-debate baseline, while an untreated voter would remain consistent ($Y(0)=1$). This creates a violation of MTR.
In contrast, our treatment is defined by the \emph{timing} of the survey. If a debate persuades a voter to switch candidates, a respondent surveyed \emph{after} the debate ($D=1$) will report their new preference, which matches their final vote ($Y(1)=1$). Meanwhile, if surveyed \emph{before} the debate ($D=0$), they report their old preference, which contradicts their final vote ($Y(0)=0$). Thus, in our design, persuasion actually reinforces the MTR condition ($Y(1) \geq Y(0)$).  Therefore, we view ``the chance of the debate causing confusion about their own preferences'' (Scenario 2) as a primary threat to our identification strategy.

Under \Cref{ass:overlap,ass:mtr}, the population consists of three (nonempty) groups: 
(i) \emph{never-persuadable} (NP): $Y(1) = 0,  Y(0) = 0$;
(ii) \emph{already-persuaded} (AP): $Y(1) = 1,  Y(0) = 1$;
and
(iii) \emph{treatment-persuadable} (TP): $Y(1) = 1,  Y(0) = 0$.
As we never observe the pair $\{Y(1),  Y(0) \}$ jointly, we cannot assign a type to each observational unit; however, we can aim to identify the share of the three groups. Using the group probabilities, we can rewrite the parameters of interest as 
\begin{align}\label{PR:as:shares} 
\theta =  \frac{\Pr\{ \mathrm{TP} \}}{\Pr\{ \mathrm{TP} \} + \Pr\{ \mathrm{NP} \}}
\ \ \text{ and } \ \
\rtheta = \frac{\Pr\{ \mathrm{TP} \}}{\Pr\{ \mathrm{TP} \} + \Pr\{ \mathrm{AP} \}}.
\end{align}
We will now consider conditions under which we can point or partially identify the causal parameters. 

\section{Identification}\label{sec:id}

We first consider the standard case of exogenous treatment as a benchmark. 
\begin{assumption}[Exogenous Treatment] \label{ass:unconf}
For $d\in\{0,1\}$, $Y(d)$ is independent of $D$.
\end{assumption} 
\Cref{ass:overlap} ensures that $\Pr(Y=1\mid D=1) > 0$ and $\Pr(Y=1\mid D=0) < 1$, and we define the following population quantities: 
\begin{align*} 
\theta_U &= \frac{\Pr( Y = 1 \mid D=1) - \Pr(Y=1\mid D=0)}{\Pr(Y=0\mid D=0)}, \\
\rtheta_U &=  \frac{\Pr( Y = 1 \mid D=1) - \Pr(Y=1\mid D=0)}{\Pr(Y=1\mid D=1)},
\end{align*} 
which are directly identified from the data. We now state the following benchmark result; see, e.g., \citet[Theorem~1]{jun2023identifying}.

\begin{lemma}[Benchmark]\label{lem:benchmark}
Under \Cref{ass:overlap,ass:unconf,ass:mtr}, we have
\[
\Pr\{ \mathrm{NP}  \} =  \Pr(Y=0\mid D=1)
\quad \text{and}\quad 
\Pr\{ \mathrm{AP}  \} =  \Pr(Y=1\mid D=0).
\] 
In particular, $\theta$ and $\rtheta$ are point-identified by $\theta_U$ and $\rtheta_U$, respectively.
\end{lemma}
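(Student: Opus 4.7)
The plan is to prove the identification of the two group shares first, and then derive the identification of $\theta$ and $\rtheta$ by algebraic substitution into the expressions in \eqref{PR:as:shares}.

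First I would invoke \Cref{ass:mtr} to eliminate the fourth logical type. Since $Y(1)\ge Y(0)$ a.s., the event $\{Y(1)=0, Y(0)=1\}$ has probability zero, so the population decomposes into only NP, AP, and TP. This yields the marginals
\[
\Pr\{Y(1)=0\} = \Pr\{\mathrm{NP}\} \quad \text{and} \quad \Pr\{Y(0)=1\} = \Pr\{\mathrm{AP}\},
\]
by partitioning each potential-outcome event according to the value of the other potential outcome and dropping the MTR-excluded cell.

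Next I would use \Cref{ass:unconf} to equate the observed conditional probabilities with these marginals: since $Y(1)\independent D$, we have $\Pr(Y=0\mid D=1) = \Pr\{Y(1)=0\mid D=1\} = \Pr\{Y(1)=0\} = \Pr\{\mathrm{NP}\}$, and analogously $\Pr(Y=1\mid D=0) = \Pr\{Y(0)=1\} = \Pr\{\mathrm{AP}\}$. \Cref{ass:overlap} ensures these conditioning events have positive probability, making the conditional statements well defined. This delivers the first display in the lemma.

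For the second claim, I would combine the two identified shares with the constraint $\Pr\{\mathrm{NP}\} + \Pr\{\mathrm{AP}\} + \Pr\{\mathrm{TP}\} = 1$ to obtain
\[
\Pr\{\mathrm{TP}\} = 1 - \Pr(Y=0\mid D=1) - \Pr(Y=1\mid D=0) = \Pr(Y=1\mid D=1) - \Pr(Y=1\mid D=0),
\]
and then plug these expressions into \eqref{PR:as:shares}. For $\theta$, the denominator $\Pr\{\mathrm{TP}\} + \Pr\{\mathrm{NP}\}$ telescopes to $\Pr(Y=0\mid D=0)$, recovering $\theta_U$; for $\rtheta$, the denominator $\Pr\{\mathrm{TP}\} + \Pr\{\mathrm{AP}\}$ telescopes to $\Pr(Y=1\mid D=1)$, recovering $\rtheta_U$.

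There is no real obstacle here; the argument is a direct consequence of MTR collapsing the type support and exogenous treatment turning each relevant observed conditional probability into the corresponding marginal. The only step that requires a bit of care is the arithmetic simplification of the denominators, and checking that \Cref{ass:overlap} rules out dividing by zero in the definitions of $\theta_U$ and $\rtheta_U$.
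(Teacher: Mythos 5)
Your proof is correct and follows exactly the standard argument that the paper defers to via its citation of \citet[Theorem~1]{jun2023identifying}: MTR collapses the type space so that $\Pr\{Y(1)=0\}=\Pr\{\mathrm{NP}\}$ and $\Pr\{Y(0)=1\}=\Pr\{\mathrm{AP}\}$, exogeneity converts these marginals into the observed conditionals, and the telescoping of the denominators to $\Pr(Y=0\mid D=0)$ and $\Pr(Y=1\mid D=1)$ delivers $\theta_U$ and $\rtheta_U$. Your remark that \Cref{ass:overlap} guarantees the denominators are nonzero is also the right place to invoke that assumption; nothing is missing.
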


\Cref{ass:unconf} is popular, but it can be restrictive, as it precludes endogenous treatment assignment. Also, we are implicit here, but \Cref{ass:unconf} is often stated after conditioning on a sufficiently large vector of covariates, which can be too demanding in practice. To go beyond \Cref{ass:unconf}, we now consider an alternative assumption that allows endogenous treatment, namely monotone treatment selection.

\begin{assumption}[Monotone Treatment Selection (MTS)]\label{ass:mts}
For all $d\in\{0,1\}$, we have 
\[
   \Pr\{ Y(d) = 1\mid D=1 \} \geq \Pr\{ Y(d) = 1\mid D=0 \}.
\]  
\end{assumption}

In LP's study, the key identifying assumption is that, once potential confounders are accounted for, the date of the debate is uncorrelated with vote choice consistency.  Therefore, roughly speaking, their event study design is based on (conditional) exogenous treatment assignment. However, controlling for all possible confounders in an observational study is inevitably challenging, even though LP makes a convincing case. 

As a robust alternative, we employ the MTS assumption introduced by \citet{manski2000monotone}. Formally, it requires that those in the treatment group be no less likely to ``succeed'' than those in the control group. Even if $D$ is a pure choice variable, \Cref{ass:mts} stays valid as long as those who chose $D=1$ are not strictly inferior on average to those who opted out.  In the context of LP, being surveyed later in the election cycle (i.e., \(D=1\)) naturally places respondents closer to the election day, and hence it makes vote consistency more likely, rendering MTS a plausible condition even without needing to control for covariates. See \Cref{sec:eg} for more details.

\begin{theorem} \label{thm:main}
Suppose that \Cref{ass:overlap,ass:mtr,ass:mts} hold. The sharp bounds on $\Pr\{ \mathrm{NP} \}$ and $\Pr\{ \mathrm{AP} \}$ are given by
\begin{align*}
\Pr\{ \mathrm{NP} \} \in \left[ \Pr( Y = 0 \mid D=1), \Pr( Y = 0) \right]
\quad \text{and}\quad 
\Pr\{ \mathrm{AP} \} \in \left[ \Pr( Y = 1 \mid D=0), \Pr( Y = 1) \right],
\end{align*}
where $\Pr\{ \mathrm{TP}\} = 1- \Pr\{ \mathrm{NP}\} - \Pr\{ \mathrm{AP}\}$.
In particular, the sharp identified intervals on $\theta, \rtheta$ are given by 
\begin{align*} 
0 \leq \theta \leq \theta_U 
\quad \text{ and } \quad 
0 \leq \rtheta \leq \rtheta_U.
\end{align*} 
\proof 
See the appendix. \qed 
\end{theorem}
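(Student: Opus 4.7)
The plan is to parameterize all joint laws of the latent type and the treatment that are compatible with \Cref{ass:mtr} and the observed data, read \Cref{ass:mts} as linear inequalities on the remaining degrees of freedom, and then optimize. Under \Cref{ass:mtr} each subject is of exactly one type in $\{\mathrm{NP},\mathrm{AP},\mathrm{TP}\}$, so the joint law is fully described by six non-negative cell masses $\pi_{td}=\Pr(\text{type}=t,D=d)$. Matching to the observed marginals pins down four identities: $\pi_{\mathrm{NP},1}=\Pr(Y=0,D=1)$, $\pi_{\mathrm{AP},0}=\Pr(Y=1,D=0)$, $\pi_{\mathrm{AP},1}+\pi_{\mathrm{TP},1}=\Pr(Y=1,D=1)$, and $\pi_{\mathrm{NP},0}+\pi_{\mathrm{TP},0}=\Pr(Y=0,D=0)$. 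Taking $\pi_{\mathrm{AP},1}$ and $\pi_{\mathrm{TP},0}$ as the two remaining free parameters, direct accounting yields
\[
\Pr\{\mathrm{NP}\} = \Pr(Y=0) - \pi_{\mathrm{TP},0}, \qquad
\Pr\{\mathrm{AP}\} = \Pr(Y=1,D=0) + \pi_{\mathrm{AP},1}.
\]

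I would then translate \Cref{ass:mts} into linear constraints: the MTS condition on $Y(0)$ becomes $\pi_{\mathrm{AP},1} \geq \Pr(Y=1,D=0)\Pr(D=1)/\Pr(D=0)$, while the MTS condition on $Y(1)$ becomes $\pi_{\mathrm{TP},0} \leq \Pr(Y=1,D=1)\Pr(D=0)/\Pr(D=1) - \Pr(Y=1,D=0)$. Combined with non-negativity, each free parameter is confined to a closed interval, and because the two parameters never appear together in any constraint they vary independently. Since $\Pr\{\mathrm{NP}\}$ and $\Pr\{\mathrm{AP}\}$ are strictly monotone affine functions of the respective free parameter, the sharp ranges are obtained at the endpoints of these intervals; routine simplification collapses the corner values to $\Pr(Y=0\mid D=1)$ and $\Pr(Y=0)$ for NP, and $\Pr(Y=1\mid D=0)$ and $\Pr(Y=1)$ for AP.

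For the persuasion rates I would use \cref{PR:as:shares} together with the identity $\Pr\{\mathrm{NP}\}+\Pr\{\mathrm{TP}\}=1-\Pr\{\mathrm{AP}\}$ to write $\theta = 1 - \Pr\{\mathrm{NP}\}/(1-\Pr\{\mathrm{AP}\})$, which is jointly monotone decreasing on the admissible rectangle. Its maximum is therefore attained at the simultaneous minima of $\Pr\{\mathrm{NP}\}$ and $\Pr\{\mathrm{AP}\}$, and routine cancellation returns $\theta_U$; the minimum $\theta=0$ is attained by setting $(\pi_{\mathrm{TP},0},\pi_{\mathrm{AP},1})=(0,\Pr(Y=1,D=1))$, which lies in the feasible rectangle and forces $\Pr\{\mathrm{TP}\}=0$. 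The argument for $\rtheta$ is symmetric once one rewrites $\rtheta = 1 - \Pr\{\mathrm{AP}\}/(1-\Pr\{\mathrm{NP}\})$.

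The main delicacy I anticipate is verifying that the admissible rectangle in $(\pi_{\mathrm{AP},1},\pi_{\mathrm{TP},0})$ is nonempty and that its corners are simultaneously attainable, i.e., that no hidden cross-restrictions couple the two free parameters. Both points reduce to the single data inequality $\Pr(Y=1\mid D=1) \geq \Pr(Y=1\mid D=0)$, which is itself implied by chaining the MTS condition on $Y(1)$ with the MTR inequality $Y(1) \geq Y(0)$; once this is established, the rest of the argument is just book-keeping on a two-dimensional linear program.
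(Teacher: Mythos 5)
Your proposal is correct and follows essentially the same route as the paper's own proof: both tabulate the joint law of the latent type and $D$ under MTR, reduce it to two free cell masses (your $(\pi_{\mathrm{TP},0},\pi_{\mathrm{AP},1})$ are the paper's $(q_2, P_{11}-q_5)$), translate MTS into decoupled box constraints, and optimize affine or monotone objectives over the resulting rectangle. Your explicit check that the corners are jointly attainable via the implied inequality $\Pr(Y=1\mid D=1)\geq\Pr(Y=1\mid D=0)$ is a slightly more careful treatment of sharpness than the paper's brief ``sharpness follows from continuity,'' but the substance is identical.
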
 

\Cref{thm:main} shows that, for both $\theta$ and $\rtheta$, the same population quantities $\theta_U$ and $\rtheta_U$ arise whether we assume \Cref{ass:unconf} or \Cref{ass:mts}, although their interpretation changes. For example, under \Cref{ass:overlap,ass:unconf,ass:mtr}, $\theta$ is point-identified by $\theta_U$, while that same $\theta_U$ becomes the sharp upper bound on $\theta$ under \Cref{ass:overlap,ass:mtr,ass:mts}. Consequently, a plug-in estimator of  $\theta_U$ is consistent for $\theta$ under exogeneity, and it will be consistent for the sharp upper bound on $\theta$ under the weaker assumption of MTS.

\citet{dellavigna2007fox} and \citet{dellavigna2010persuasion} propose the following quantity as the persuasion rate:
\[
\theta_{DK}
=
\frac{\Pr[Y=1\mid Z=1] - \Pr[Y=1\mid Z=0]}{\Pr[D=1\mid Z=1] - \Pr[D=1\mid Z=0]}
\times \frac{1}{\Pr[Y=0\mid Z=0]},
\]
where $Z$ is a binary instrument. Under full compliance, that is, when $D=Z$ almost surely, this quantity coincides with the upper bound on the persuasion rate derived in \Cref{thm:main}. More generally, under partial compliance, when $D \neq Z$, it has been pointed out that $\theta_{DK}$ can be misleading and may inflate the causal persuasion rate \citep{jun2023identifying}.

In addition, \Cref{thm:main} establishes bounds on the shares of the NP, AP, and TP groups. In particular, $\Pr( Y = 0 \mid D=1)$ and $\Pr( Y = 1 \mid D=0)$ are now interpreted as sharp lower bounds on $\Pr\{ \mathrm{NP} \}$ and $\Pr\{ \mathrm{AP} \}$, respectively. Further, the proof of \Cref{thm:main} shows that \Cref{ass:mtr,ass:mts} yield
\[
0 \leq \Pr\{Y(1) = 1\} - \Pr\{ Y(0) = 1\} \leq \Pr( Y = 1 \mid D=1) - \Pr(Y=1\mid D=0),
\]
where the bounds are sharp. The sharp bounds on the share of TP follow from here because $\Pr\{\mathrm{TP}\} = \Pr\{Y(1) = 1\} - \Pr\{ Y(0) = 1\}$ equals the average treatment effect (ATE) under MTR.


\section{Estimation and Inference}

We now describe our approach to estimation and inference for the parameters of interest. Since inference for $\theta$ and $\rtheta$ follows the same principle, we will focus on the former. 

\subsection{Persuasion Rates}

We obtain plug-in estimators for $\theta_U$ by using the coefficients from an OLS regression of $Y$ on $D$ with an intercept; e.g., the coefficient of $D$ corresponds to the numerator of $\theta_U$.  If \Cref{ass:unconf} is maintained, standard two-sided confidence intervals are applicable because $\theta_U$ point-identifies $\theta$ in this case (see \Cref{lem:benchmark}). Under the weaker \Cref{ass:mts}, however, $[0, \theta_U]$ is the sharp identified bounds on $\theta$, making one-sided inference appropriate.

Specifically, under \Cref{ass:mts}, the theoretical lower bound is known to be zero. Accordingly, we construct the $(1-\alpha)$ confidence interval as
\[
\mathrm{CI} = \left[ 0, \ \hat{\theta}_U + c_\alpha \cdot \widehat{\mathrm{SE}}_U \right],
\]
where $\hat{\theta}_U$ is the estimator for the upper bound, $\widehat{\mathrm{SE}}_U$ is the corresponding standard error (computed via the delta method), and $c_\alpha = \Phi^{-1}(1-\alpha)$ is the standard one-sided critical value (e.g., $1.645$ for $\alpha=0.05$). 

Notably, this confidence interval is valid for both the true parameter $\theta$ and the identified set $[0, \theta_U]$. While general partial identification methods typically require larger critical values to cover the set \citep{Imbens/Manski:04}, our setting is simpler because the lower bound is fixed at zero. Consequently, the coverage probability for the identified set coincides with that of the parameter at the least favorable configuration (i.e., $\theta = \theta_U$). Thus, the standard one-sided critical value $c_\alpha$ ensures asymptotic coverage of at least $1-\alpha$ for both targets.

Furthermore, uniformity issues are simpler than in the general cases discussed in e.g., \citet{canay2017practical} too. Since $\Pr(\theta \in \mathrm{CI}) \geq \Pr\{ \theta_U(P) \in \mathrm{CI} \}$ for any data-generating process $P\in \mathcal{P}$ with the corresponding $\theta_U(P) \geq \theta \geq 0$, the coverage properties of $\mathrm{CI}$ will be uniform in $P\in \mathcal{P}$, provided that the normal approximation of $\{\hat{\theta}_U - \theta_U(P)\}/\widehat{\mathrm{SE}}_U$ is uniform. See the online appendices for more details.

\subsection{Specification Testing}

The identification result under \Cref{ass:mtr,ass:mts} implies that the upper bound must be non-negative ($\theta_U \geq 0$). This restriction is testable. Following the logic of \citet{Stoye:07}, we can construct a confidence set that becomes empty if the data provide strong evidence against the validity of the identification assumptions. Specifically, we define the modified confidence set $\mathrm{CI}_{\mathrm{mod}}$ as:
\[
\mathrm{CI}_{\mathrm{mod}} :=
\begin{cases}
[0, \ \hat\theta_U + c_\alpha \widehat{\mathrm{SE}}_U] & \text{if } \hat\theta_U + c_\alpha \widehat{\mathrm{SE}}_U \geq 0, \\
\emptyset & \text{otherwise}.
\end{cases}
\]
The event $\mathrm{CI}_{\mathrm{mod}} = \emptyset$ corresponds to rejecting $H_0: \theta_U \geq 0$ in favor of $H_1: \theta_U < 0$ at significance level $\alpha$, which indicates that MTR or MTS is violated.

\subsection{Shares of Already-Persuaded and Never-Persuadable Types}
As before, if \Cref{ass:unconf} is maintained, standard two-sided confidence intervals are applicable because the proportions of AP and NP types are point-identified (see \Cref{lem:benchmark}). Under the weaker \Cref{ass:mts}, however, the situation differs from that of persuasion rates. Here, the sharp identified sets require the bounds at both ends be estimated (see \Cref{thm:main}). To construct confidence intervals for these parameters, we adopt the method of \citet{Imbens/Manski:04} and \citet{Stoye:07}, which adjusts the critical value to account for the estimation of the length of the identified set.

Let $[\theta_{s,L}, \theta_{s,U}]$ denote the identified set for type $s \in \{\mathrm{AP}, \mathrm{NP}\}$. The $(1-\alpha)$ confidence interval is given by
\[
\mathrm{CI}_s = \left[ \hat\theta_{s,L} - c_\alpha \widehat{\mathrm{SE}}_{s,L}, \quad \hat\theta_{s,U} + c_\alpha \widehat{\mathrm{SE}}_{s,U} \right],
\]
where $c_\alpha$ is chosen to satisfy
\[
\Phi\left( c_\alpha + \frac{\hat{\Delta}_s}{\max(\widehat{\mathrm{SE}}_{s,L}, \widehat{\mathrm{SE}}_{s,U})} \right) - \Phi(-c_\alpha) = 1-\alpha,
\]
with $\hat{\Delta}_s$ representing the estimated length of the interval. This approach ensures (uniformly) valid coverage for the true parameter $\theta_s$ while mitigating the conservatism inherent in intervals designed to cover the entire identified set. We refer to \citet{Stoye:07} for formal details.

\section{Bounding the Probabilities of Causation}\label{sec:pc}
In this section, we discuss a few related parameters that have been discussed in the causal inference literature. Both $\theta$ and $\rtheta$ are conditioned on a counterfactual outcome, which is not directly observed. Advocating the idea that we should consider probabilities of counterfactual events given the information that is observed, \citet{pearl1999probabilities} introduces the following parameters and calls them probabilities of causation: 
\begin{align*} 
\mathrm{PS} &=  \Pr\{ Y(1) = 1\mid Y = 0, D=0 \},\\ 
\mathrm{PN} &=  \Pr\{ Y(0) = 0\mid Y = 1, D=1 \}, \\
\mathrm{PNS}  &=  \Pr\{ Y(1) = 1,  Y(0) = 0 \}, 
\end{align*}
where PS, PN, and PNS, respectively, denote the probability of sufficiency, the probability of necessity, and the probability of necessity and sufficiency.
For further discussion, see, e.g., \citet{pearl1999probabilities,Yamamoto:2012,Dawid2014fitting,Dawid2022effects,Ding:2024:arXiv:prob_necessity}. 
These quantities can also be expressed as 
\begin{align*} 
\mathrm{PS} &=  \frac{\Pr\{ \mathrm{TP} , D=0 \}}{\Pr\{ \mathrm{TP} , D=0 \} + \Pr\{ \mathrm{NP} , D=0 \}},\\
\mathrm{PN} &=  \frac{\Pr\{ \mathrm{TP} , D=1 \}}{\Pr\{ \mathrm{TP} , D=1 \} + \Pr\{ \mathrm{AP} , D=1 \}},\\
\mathrm{PNS}  &=  \Pr\{ \mathrm{TP} \}.
\end{align*}

Under \Cref{ass:overlap,ass:unconf,ass:mtr}, we have that $\theta = \mathrm{PS}$, $\rtheta = \mathrm{PN}$, and 
$\mathrm{PNS} = \Pr( Y = 1 \mid D=1) - \Pr(Y=1\mid D=0)$, as initially noted by \citet{pearl1999probabilities}.
More interestingly, under \Cref{ass:overlap,ass:mtr,ass:mts}, we have the following results.

\begin{theorem} \label{thm:main:pc}
Suppose that \Cref{ass:overlap,ass:mtr,ass:mts} hold. Then, the sharp identified intervals on $\mathrm{PS}$,  $\mathrm{PN}$, and $\mathrm{PNS}$, respectively, are given by 
\begin{align*} 
0 \leq \mathrm{PS} \leq \theta_U, \;
0 \leq \mathrm{PN} \leq \rtheta_U, \; \text{ and } \;
0 \leq \mathrm{PNS} \leq \Pr( Y = 1 \mid D=1) - \Pr( Y = 1 \mid D=0),
\end{align*} 
\proof 
See the appendix. \qed 
\end{theorem}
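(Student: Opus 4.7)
The approach is to re-express each probability of causation in terms of the three group shares conditional on $D$, use \Cref{ass:mtr} to eliminate the $(Y(0)=1, Y(1)=0)$ type, and then apply \Cref{ass:mts} to bound the remaining unobserved conditional probabilities. Because $Y = DY(1) + (1-D)Y(0)$, the event $\{Y=0, D=0\}$ coincides with $\{Y(0)=0, D=0\}$, which under MTR equals $\{\mathrm{NP} \text{ or } \mathrm{TP}, D=0\}$; hence $\mathrm{PS} = \Pr(\mathrm{TP}\mid D=0)/\Pr(Y=0\mid D=0)$. Symmetrically, $\mathrm{PN} = \Pr(\mathrm{TP}\mid D=1)/\Pr(Y=1\mid D=1)$, and $\mathrm{PNS} = \Pr(\mathrm{TP})$ is the ATE, already characterized in \Cref{thm:main}.

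For the upper bound on $\mathrm{PS}$, I would decompose
\[
\Pr(\mathrm{TP}\mid D=0) = \Pr\{Y(1)=1 \mid D=0\} - \Pr\{Y(0)=1 \mid D=0\},
\]
where the identity follows from MTR (the AP contribution cancels and no $(Y(0)=1, Y(1)=0)$ term survives). Then MTS yields $\Pr\{Y(1)=1 \mid D=0\} \leq \Pr\{Y(1)=1 \mid D=1\} = \Pr(Y=1\mid D=1)$, while $\Pr\{Y(0)=1 \mid D=0\} = \Pr(Y=1\mid D=0)$ is directly observed. Dividing by $\Pr(Y=0\mid D=0)$ gives $\mathrm{PS} \leq \theta_U$. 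The analogous argument applied to $Y(0)$ under $D=1$ yields $\mathrm{PN} \leq \rtheta_U$, and the bound on $\mathrm{PNS}$ is a direct corollary of \Cref{thm:main}, since the sharp lower bounds on $\Pr(\mathrm{NP})$ and $\Pr(\mathrm{AP})$ there sum to $1 - [\Pr(Y=1\mid D=1) - \Pr(Y=1\mid D=0)]$.

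For sharpness, I would exhibit explicit joint distributions of $(Y(0), Y(1), D)$ matching the observed $(Y, D)$ margins while satisfying MTR and MTS. Writing $p_d = \Pr(Y=1\mid D=d)$, the upper endpoint is saturated simultaneously for all three parameters by the allocation
\[
\Pr(\mathrm{TP}\mid D=d) = p_1 - p_0, \quad \Pr(\mathrm{NP}\mid D=d) = 1 - p_1, \quad \Pr(\mathrm{AP}\mid D=d) = p_0,
\]
for both $d\in\{0,1\}$; in this construction MTS holds with equality for both $Y(1)$ and $Y(0)$. The lower endpoint of zero is attained by setting $\Pr(\mathrm{TP}\mid D=d) = 0$ and placing all mass on NP and AP to match the margins, which reduces MTS to the observable inequality $p_1 \geq p_0$. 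Intermediate values follow by convex combination.

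The main subtlety is verifying that the upper-bound construction is a valid probability distribution: the allocations require $1 - p_1 \geq 0$, which is automatic, and $p_1 - p_0 \geq 0$, which is precisely the testable implication of MTR and MTS noted after \Cref{thm:main}. Once these are in hand, the same construction saturates all three bounds simultaneously, paralleling the sharp-bound construction underlying \Cref{thm:main}.
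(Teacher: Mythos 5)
Your proof is correct and follows essentially the same route as the paper's: both reduce $\mathrm{PS}$ and $\mathrm{PN}$ to $\Pr\{\mathrm{TP}, D=0\}/\Pr(Y=0,D=0)$ and $\Pr\{\mathrm{TP}, D=1\}/\Pr(Y=1,D=1)$ (the paper's $q_2/P_{00}$ and $q_5/P_{11}$), bound the numerators via the MTS inequalities on $\Pr\{Y(1)=1\mid D\}$ and $\Pr\{Y(0)=1\mid D\}$, and inherit the $\mathrm{PNS}$ bound from \Cref{thm:main}. The only cosmetic difference is that you exhibit an explicit extremal joint distribution for sharpness, whereas the paper notes that the constraints on $q_2$ and $q_5$ are not cross-linked and so their maxima are attained simultaneously.
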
 
The sharp bounds on $\textrm{PNS}$ are immediate from \Cref{thm:main} because $\Pr\{\mathrm{TP}\}$ equals ATE under MTR. The main point here is that the sharp identified bounds on $\mathrm{PS}$ and $\mathrm{PN}$ under \Cref{ass:overlap,ass:mtr,ass:mts} coincide with those on $\theta$ and $\rtheta$, respectively.


\section{Bounds under Alternative Assumptions}

In this section, we derive the identified regions for the persuasion rate $\theta$ and the reverse persuasion rate $\theta^{(r)}$ under relaxed assumptions. We consider three scenarios: (i) no monotonicity assumptions; (ii) MTR only; and (iii) MTS only.

\begin{theorem}\label{thm:alternative_bounds}
Let \Cref{ass:overlap} hold. The sharp identified intervals for $\theta$ and $\theta^{(r)}$, respectively, are given as follows: 
\begin{enumerate}
    \item With no other assumptions: $\theta \in [0, 1]$ and $\theta^{(r)} \in [0, 1]$.
    \item With \Cref{ass:mtr} only: $\theta \in [0, 1]$ and $\theta^{(r)} \in [0, 1]$.
    \item With \Cref{ass:mts} only: $\theta \in [0, \overline{\theta}_{\text{MTS}}]$ and $\theta^{(r)} \in [0, \overline{\theta}^{(r)}_{\text{MTS}}]$, where
    \begin{align*}
        \overline{\theta}_{\text{MTS}}
        &:= \min\left\{ \frac{\Pr(Y=1\mid D=1)}{\Pr(Y=1,D=1)+\Pr(Y=0,D=0)},\ 1 \right\}, \\
        \overline{\theta}^{(r)}_{\text{MTS}}
        &:= \min\left\{ \frac{\Pr(Y=0\mid D=0)}{\Pr(Y=1,D=1)+\Pr(Y=0,D=0)},\ 1 \right\}.
    \end{align*}
\end{enumerate}
\end{theorem}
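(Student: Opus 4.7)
The plan is to cast each part as a linear program on the joint distribution of principal strata and treatment. Writing $p_{ts,d}=\Pr(\text{type}=ts,D=d)$ for the four strata $ts\in\{\mathrm{NP},\mathrm{AP},\mathrm{TP},\mathrm{BL}\}$, where BL denotes the backlash stratum $\{Y(0)=1,Y(1)=0\}$, and $d\in\{0,1\}$, the four observable identities $\Pr(Y=y,D=d)$ become linear equalities in the $p_{ts,d}$'s; \Cref{ass:mtr} simply removes BL, and \Cref{ass:mts} imposes two linear inequalities on the conditional shares $r_{ts,d}=p_{ts,d}/\Pr(D=d)$. In each case the feasible set is a convex polytope and $\theta=\Pr(\mathrm{TP})/[\Pr(\mathrm{TP})+\Pr(\mathrm{NP})]$ is continuous on it, so once extremal distributions realize the two endpoints, convexity and continuity deliver the intermediate values and hence an interval identified region.

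For parts (i) and (ii), the lower endpoint $\theta=0$ is realized by setting $p_{\mathrm{TP},0}=p_{\mathrm{TP},1}=0$ and assigning the residual mass to match observables. In (i), the upper endpoint $\theta=1$ is realized by the dual construction $p_{\mathrm{NP},0}=p_{\mathrm{NP},1}=0$, moving the $\{Y=0,D=0\}$ mass into TP and the $\{Y=0,D=1\}$ mass into BL. Under \Cref{ass:mtr} alone, BL is forbidden, which forces $p_{\mathrm{NP},1}=\Pr(Y=0,D=1)>0$ by \Cref{ass:overlap}; strictly speaking, $\theta=1$ is then only attained as a supremum, so I would verify whether the stated interval is meant as a trivial range or whether a tighter expression is required. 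The analogous construction for $\theta^{(r)}$ proceeds by interchanging the roles of NP and AP throughout.

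Part (iii) is the substantive case. Parametrizing MTS-feasibility by $(u,v,w,x)=(r_{\mathrm{TP},0},r_{\mathrm{TP},1},r_{\mathrm{BL},0},r_{\mathrm{BL},1})$ and substituting the observable identities, the two MTS inequalities reduce to
\[
w\geq u+\Pr(Y=1\mid D=0)-\Pr(Y=1\mid D=1)\quad\text{and}\quad x\geq v+\Pr(Y=1\mid D=0)-\Pr(Y=1\mid D=1).
\]
To maximize $\theta$, push $x$ to its ceiling $\Pr(Y=0\mid D=1)$ (so that every $\{Y=0,D=1\}$ observation is allocated to BL rather than NP) and $w$ to its ceiling $\Pr(Y=1\mid D=0)$; the first MTS inequality then reduces to $u\leq\Pr(Y=1\mid D=1)$, which combined with the observable cap $u\leq\Pr(Y=0\mid D=0)$ gives the effective cap $u\leq\min\{\Pr(Y=0\mid D=0),\Pr(Y=1\mid D=1)\}$. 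When the observable cap is the smaller one, saturating it kills NP entirely and delivers $\theta=1$; otherwise, setting $u=v=\Pr(Y=1\mid D=1)$ and simplifying yields $\theta=\Pr(Y=1\mid D=1)/[\Pr(Y=1,D=1)+\Pr(Y=0,D=0)]$. These two regimes combine into $\overline{\theta}_{\text{MTS}}$; the formula $\overline{\theta}^{(r)}_{\text{MTS}}$ follows by the symmetric construction that maximizes $w$ and $v$ instead (eliminating AP among the controls).

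The step I expect to be most delicate is verifying that, in the binding regime of (iii), the chosen $(u,v,w,x)$ jointly respects both MTS inequalities and keeps all $r_{ts,d}\geq 0$: one must check case by case that the non-binding MTS inequality is automatically satisfied once the binding one is saturated, which is where the complementary conditions $\alpha_1\beta_1 \gtrless \alpha_0\beta_0$ come into play. Once this feasibility is confirmed, sharpness of the full interval $[0,\overline{\theta}_{\text{MTS}}]$ follows from convexity of the feasible polytope and continuity of $\theta$ in the $p_{ts,d}$'s.
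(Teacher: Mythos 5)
Your handling of cases (i) and (iii) is essentially the paper's own argument: the appendix likewise casts the problem as an optimization over the joint law of $\bigl(Y(0),Y(1),D\bigr)$, reduces the observational equalities to a few free cell probabilities, turns the two MTS inequalities into linear constraints on those cells, and pushes the cells to their ceilings in the same order (first the cells that only relax constraints, then the TP cells). Your regime split in (iii) is the right one, and the feasibility check you defer does go through: writing $P_{yd}=\Pr(Y=y,D=d)$ and $Q_d=\Pr(D=d)$, the condition $\Pr(Y=1\mid D=1)\geq \Pr(Y=0\mid D=0)$ separating the two regimes is algebraically equivalent to $\Pr(Y=0\mid D=1)\geq \Pr(Y=1\mid D=0)$, which is exactly what makes the second MTS inequality slack at your proposed optimum; spelling out that equivalence is the only missing step, and it is the same computation the paper performs when simplifying $Q_U\left(P_{01},\max(0,P_{10}Q_1/Q_0-P_{01})\right)$. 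Your direct construction for case (i), routing the $(Y=0,D=1)$ mass into the backlash stratum, is also fine and attains $\theta=1$ exactly.

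Case (ii) is where you stop short, and your instinct that something is wrong is correct---indeed the situation is worse than you describe. Under \Cref{ass:mtr}, every unit with $D=1$ and $Y=0$ has $Y(1)=0$ and hence, by monotonicity, is of type NP, so $\Pr\{\mathrm{NP}\}\geq P_{01}>0$ by \Cref{ass:overlap}; since also $\Pr\{\mathrm{TP}\}\leq P_{00}+P_{11}$, we get $\theta\leq (P_{00}+P_{11})/(P_{00}+P_{11}+P_{01})<1$, and this value is attained by putting all of the $(Y=0,D=0)$ and $(Y=1,D=1)$ mass into TP. So $1$ is not merely unattained: it is not even the supremum, and the sharp upper bound under MTR alone is $(P_{00}+P_{11})/(P_{00}+P_{11}+P_{01})$ for $\theta$ (analogously $(P_{00}+P_{11})/(P_{00}+P_{11}+P_{10})$ for $\rtheta$). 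The paper's own proof of case (ii) treats $\Pr\{Y(1)=1\}\leq 1$ as a sharp restriction, but \Cref{ass:overlap} forces $\Pr\{Y(1)=1\}=P_{11}+\Pr\{Y(1)=1,D=0\}\leq 1-P_{01}$, so the stated $[0,1]$ interval in case (ii) is a valid outer region but not the sharp identified set; this also undercuts the paper's derivation of case (i) by reduction to case (ii), although case (i) itself survives via your direct construction. You should either prove the corrected bound for case (ii) or state explicitly that $[0,1]$ is only an outer bound there; as written, your proposal neither proves the claim nor pins down the actual sharp set.
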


\begin{proof}
See the online appendix. The argument closely parallels the proof of \Cref{thm:main}.
\end{proof}

\Cref{thm:alternative_bounds} shows that the MTR assumption alone provides no identifying power for either $\theta$ or $\theta^{(r)}$, whereas MTS alone may yield non-trivial bounds. Intuitively, the role of MTR is restricted to relating the joint probabilities of the potential outcomes to their marginal distributions, whereas MTS has identifying power for the marginals of the potential outcomes themselves.

Furthermore, note that
\begin{align*}
 \overline{\theta}_{\text{MTS}} < 1
& \Longleftrightarrow
\Pr(Y=1,D=1) < \Pr(D=1)\{\Pr(Y=1,D=1)+\Pr(Y=0,D=0)\} \\
& \Longleftrightarrow
\Pr(Y=0,D=0) > \Pr(D=0)\{\Pr(Y=1,D=1)+\Pr(Y=0,D=0)\} \\
& \Longleftrightarrow
\overline{\theta}^{(r)}_{\text{MTS}} = 1.
\end{align*}
Thus, whenever MTS yields a non-trivial upper bound on $\theta$, the sharp bound on $\rtheta$ becomes trivial. 
%
In contrast, we have 
\[
\overline{\theta}_{\text{MTS}} \geq \theta_U
\quad\text{and}\quad
\overline{\theta}^{(r)}_{\text{MTS}} \geq \theta^{(r)}_U
\]
by construction, where both $\theta_U$ and $\theta^{(r)}_U$ can be non-trivial. We therefore adopt the joint assumption of MTR and MTS as our preferred identifying restrictions.

We now turn to the sharp identified regions for $\mathrm{PS}$, $\mathrm{PN}$, and $\mathrm{PNS}$ under the three scenarios. Define
\begin{align*}
\overline{\mathrm{PNS}}_{\text{NA}} &:= \Pr(Y=1,D=1) + \Pr(Y=0,D=0), \\
\overline{\mathrm{PNS}}_{\text{MTS}} &:= \min\{ \Pr(Y=0\mid D=0),\ \Pr(Y=1\mid D=1) \}, \\
\overline{\mathrm{PS}}_{\text{MTS}} &:= \min\left\{ \frac{\Pr(Y=1\mid D=1)}{\Pr(Y=0\mid D=0)},\ 1 \right\}, \\
\overline{\mathrm{PN}}_{\text{MTS}} &:= \min\left\{ \frac{\Pr(Y=0\mid D=0)}{\Pr(Y=1\mid D=1)},\ 1 \right\}.
\end{align*}

\begin{theorem}\label{thm:alternative_bounds_pearl_tian}
Let \Cref{ass:overlap} hold. The sharp identified intervals for $\mathrm{PS}$, $\mathrm{PN}$, and $\mathrm{PNS}$, respectively, are given as follows: 
\begin{enumerate}
    \item With no other assumptions: $\mathrm{PS} \in [0, 1]$, $\mathrm{PN} \in [0, 1]$, and $\mathrm{PNS} \in [0, \overline{\mathrm{PNS}}_{\text{NA}}]$.
    \item With \Cref{ass:mtr} only: $\mathrm{PS} \in [0, 1]$, $\mathrm{PN} \in [0, 1]$, and $\mathrm{PNS} \in [0, \overline{\mathrm{PNS}}_{\text{NA}}]$.
    \item With \Cref{ass:mts} only: $\mathrm{PS} \in [0, \overline{\mathrm{PS}}_{\text{MTS}}]$, $\mathrm{PN} \in [0, \overline{\mathrm{PN}}_{\text{MTS}}]$, and $\mathrm{PNS} \in [0, \overline{\mathrm{PNS}}_{\text{MTS}}]$.
\end{enumerate}
\end{theorem}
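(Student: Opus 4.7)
The plan is to parametrize the latent joint law of $(Y(1), Y(0), D)$ by the eight atoms $\pi_{y_1, y_0, d} := \Pr\{Y(1) = y_1, Y(0) = y_0, D = d\}$, to express the targets as $\mathrm{PNS} = \pi_{1,0,1} + \pi_{1,0,0}$, $\mathrm{PS} = \pi_{1,0,0}/\Pr(Y=0, D=0)$, and $\mathrm{PN} = \pi_{1,0,1}/\Pr(Y=1, D=1)$, and then to characterize the feasible range of these atoms under each assumption bundle. The four observational constraints reduce to $\pi_{1,0,1} + \pi_{1,1,1} = \Pr(Y=1,D=1)$, $\pi_{0,0,1} + \pi_{0,1,1} = \Pr(Y=0,D=1)$, $\pi_{1,1,0} + \pi_{0,1,0} = \Pr(Y=1,D=0)$, and $\pi_{0,0,0} + \pi_{1,0,0} = \Pr(Y=0,D=0)$. \Cref{ass:mtr} enforces $\pi_{0,1,0} = \pi_{0,1,1} = 0$, and \Cref{ass:mts} yields the two inequalities $\Pr\{Y(d) = 1 \mid D = 1\} \geq \Pr\{Y(d) = 1 \mid D = 0\}$ for $d \in \{0,1\}$, each of which is linear in the $\pi$'s.

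For Scenarios (i) and (ii), the decisive point is that $\{Y = 0, D = 0\}$ pins down $Y(0) = 0$ but leaves $Y(1)$ unconstrained even under MTR; hence $\pi_{1,0,0}$ may take any value in $[0, \Pr(Y=0, D=0)]$, so $\mathrm{PS} \in [0, 1]$ with both endpoints sharp by allocating all $(Y=0, D=0)$ mass either to TP or to NP and distributing the remaining observed masses across AP/NP cells (with the two defier cells set to zero whenever \Cref{ass:mtr} is imposed). A symmetric argument gives $\mathrm{PN} \in [0, 1]$. For PNS, the naive split
\[
\pi_{1,0,1} + \pi_{1,0,0} \leq \Pr\{Y(1)=1, D=1\} + \Pr\{Y(0)=0, D=0\} = \Pr(Y=1, D=1) + \Pr(Y=0, D=0)
\]
coincides with $\overline{\mathrm{PNS}}_{\text{NA}}$, and is attained by setting $\pi_{1,0,1} = \Pr(Y=1, D=1)$ and $\pi_{1,0,0} = \Pr(Y=0, D=0)$ and routing the remaining two observed masses through AP and NP cells, a configuration that is automatically MTR-compatible.

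For Scenario (iii), I would translate MTS into bounds on the marginal laws of each potential outcome. Applied to $Y(1)$ it yields $\Pr\{Y(1)=1 \mid D=0\} \leq \Pr(Y=1 \mid D=1)$, so $\pi_{1,0,0} \leq \Pr\{Y(1)=1, D=0\} \leq \Pr(Y=1 \mid D=1)\Pr(D=0)$; dividing by $\Pr(Y=0, D=0)$ and also invoking the trivial upper bound $1$ gives $\mathrm{PS} \leq \overline{\mathrm{PS}}_{\text{MTS}}$. A symmetric argument based on MTS applied to $Y(0)$ produces the PN bound. For PNS, the same MTS inequalities yield $\Pr\{Y(1)=1\} \leq \Pr(Y=1\mid D=1)$ and $\Pr\{Y(0)=0\} \leq \Pr(Y=0\mid D=0)$, which combined with the Fréchet-type inequality $\Pr\{Y(1)=1, Y(0)=0\} \leq \min\{\Pr\{Y(1)=1\}, \Pr\{Y(0)=0\}\}$ delivers $\mathrm{PNS} \leq \overline{\mathrm{PNS}}_{\text{MTS}}$.

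The hard part is establishing sharpness under MTS, since the extremal DGP must simultaneously saturate one of these inequalities, respect both MTS constraints, and reproduce the observed $(Y,D)$ marginals. I would proceed by a case split on whether the minimum defining each of $\overline{\mathrm{PS}}_{\text{MTS}}$, $\overline{\mathrm{PN}}_{\text{MTS}}$, and $\overline{\mathrm{PNS}}_{\text{MTS}}$ is attained at $1$ or at the ratio of observables. In the nontrivial case for PS, I set $\pi_{1,0,0} = \Pr(Y=1\mid D=1)\Pr(D=0)$, give the rest of the $(Y=0, D=0)$ mass to NP, place all $(Y=1, D=0)$ mass on the reverse-MTR cell $\pi_{0,1,0}$ so that $\Pr\{Y(1)=1\mid D=0\}$ sits exactly at its MTS ceiling, and assign the $D=1$ cells to AP and reverse-MTR so that $\Pr\{Y(0)=1\mid D=1\} = 1 \geq \Pr(Y=1\mid D=0)$; symmetric constructions handle PN, and for PNS an analogous allocation fills the TP cells at both $d=0$ and $d=1$ simultaneously. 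The lower bound $0$ is achieved by voiding all TP mass and loading the remaining $(Y=1, D=0)$ mass on the reverse-MTR type, which satisfies MTS trivially.
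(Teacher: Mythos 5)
Your proposal is correct and follows essentially the same route as the paper's proof: both parametrize the latent joint law of $\bigl(Y(0),Y(1),D\bigr)$ by its eight atoms, impose the observational equalities and the two linear MTS inequalities, and extremize the resulting linear-fractional objectives, arriving at exactly $\overline{\mathrm{PS}}_{\text{MTS}}$, $\overline{\mathrm{PN}}_{\text{MTS}}$, and $\overline{\mathrm{PNS}}_{\text{MTS}}$ in case (iii). The only cosmetic differences are that you prove cases (i)--(ii) directly rather than citing \citet{tian2000probabilities}, and you establish sharpness by exhibiting explicit extremal allocations (correctly routing the residual $(Y{=}1,D{=}0)$ and $(Y{=}0,D{=}1)$ mass to the $Y(1)<Y(0)$ cells so the MTS ceilings are respected) instead of solving the constrained maximization sequentially as in the online appendix.
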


\begin{proof}
Case (i) is given in \citet[Section 4.2.1]{tian2000probabilities}, and case (ii) is given in \citet[Section 4.4.1]{tian2000probabilities}. For case (iii), see the online appendix. The argument closely parallels the proof of \Cref{thm:main:pc}.
\end{proof}

As in the case of $\theta$ and $\rtheta$, the MTR assumption alone does not have any identification power for any of the probabilities of causation. Our results under MTS only, i.e., \Cref{thm:alternative_bounds_pearl_tian}(iii), relate to the bounds derived by \citet[Theorem~1]{tian2000probabilities} under exogeneity. 
Comparing the two sets of the results, we find that the upper bounds yielded by MTS alone are identical to those under exogeneity. The additional identification power provided by exogeneity applies exclusively to the lower bounds.

Finally, \Cref{tab:bounds_summary} summarizes the hierarchy of bounds for the persuasion rates and the probabilities of causation under different combinations of assumptions.

\section{Empirical Example}\label{sec:eg}



In this section, we revisit the findings of LP to reevaluate the effectiveness of TV debates in persuading (undecided) voters. The debates examined in LP took place between 5 and 44 days (on average 24 days) before the election, precisely when vote choice consistency tends to increase most rapidly. Utilizing an event-study design based on 331,000 respondent-debate-election observations, LP isolate the impact of debates occurring one to three days post-event relative to the broadcast day. Their specification controls for secular time trends, sociodemographic covariates, and debate-by-election fixed effects, assuming the specific timing of the debate is exogenous conditional on these factors. See \Cref{appnx:LP:reg} for the full econometric specification.  Interestingly, LP find no significant differences in how voters' choices evolved before versus after debate periods, rejecting any effect larger than $0.5$ percentage points. This null result holds across diverse voter types, different election contexts, and even tight, uncertain races.

In our empirical analysis, we adopt an alternative strategy. Rather than controlling for the extensive set of variables used in LP, we restrict the sample to a narrow window of $[-3,+3]$ days relative to the debate. This localized approach focuses on the period immediately surrounding the event. We define the pre-debate days as the control group ($D=0$) and the debate day together with the post-debate days as the treatment group ($D=1$).  Panel B of \Cref{fig:combined_methodology} provides a visualization of the treatment definition ($D$) and the sample window.

Crucially, our method does not require conditioning on additional covariates. Unlike standard exogeneity assumptions, the MTR and MTS assumptions in this setting do not rely on covariate adjustment for their validity. MTR is imposed at the individual level, while MTS will be motivated without controlling for covariates. Therefore, our approach avoids reliance on functional form restrictions or strict exogeneity assumptions.

First, consider the plausibility of MTR, for which it is helpful to recall that the outcome variable is vote choice consistency defined by the agreement between respondents' reported voting intentions prior to the election and their realized vote choices reported after the election; if the voter is undecided before the election, then the outcome is coded as inconsistent voting whatever the final vote is. Panel A of \Cref{fig:combined_methodology} explains how the potential outcomes would be coded in several scenarios. The idea is that because debates provide information, the likelihood of a voter remaining undecided should decrease in the presence of a debate. For example, imagine a voter who would vote for Candidate A before a debate, would become disappointed and undecided after watching the debate, and would end up just not voting at all as a result. If this voter is observed before the debate, then $Y(0) = 0$ will be observed, whereas if she is surveyed after the debate, then $Y(1) = 0$ will be observed. Therefore, MTR is still satisfied in this case. From this perspective, information revealed during a debate (including details about candidates' traits such as health or competence) helps voters refine and finalize their preferences prior to voting. Even if such information leads some voters to revise their initial intentions, this does not violate MTR, provided that debate exposure does not induce voters to become ``confused'' about their own preferences. 

Next, regarding the plausibility of MTS, respondents surveyed later in the electoral cycle are closer to the election day and thus more likely to exhibit consistent voting behavior on average. This supports the unconditional MTS assumption. Further, our use of a $\pm$3-day estimation window around the debate adds extra support for its plausibility.  While respondent composition might shift over a long campaign (e.g., urban voters being surveyed earlier than rural voters), such structural shifts are implausible within such a short period. We verify this by showing that observable covariates are balanced between the pre- and post-debate groups. Specifically, when we regress the treatment indicator on the sociodemographic controls used in the paper (gender, age, income quartiles, employment status, and education), the resulting p-value for the joint F-test of significance (with standard errors clustered at the debate level) is 0.445. This failure to reject the null hypothesis supports the validity of the unconditional MTS assumption in our local setting.

\Cref{tab:combined_results} presents the estimation results divided into two panels. Panel A reports the estimates and upper confidence bounds of (the sharp upper bounds on) the Average Treatment Effect (ATE), the Average Persuasion Rate (APR, denoted as $\theta$), and the Reverse Average Persuasion Rate (R-APR, denoted as $\rtheta$). In the first row, labeled ``All,'' we consider the full sample ($N=42,136$) and find that the estimated impact of TV debates is small. The point estimate of the upper bound on ATE is 0.68 percentage points (standard error 0.56), with a 95\% upper confidence bound (UCB) of 1.60\%. These results align closely with those in LP, although we rely on a notably different methodology. Since the share of Treatment-Persuadable (TP) respondents is identical to ATE in our setting, these small values imply that very few voters' consistency relies exclusively on watching TV debates.

The middle three columns in Panel A report the upper bounds on APR, its standard error (SE), and its 95\% UCB, while the last three columns present analogous estimates for R-APR. Recall that these rates rescale the share of TP relative to different denominators:
\[
\textrm{APR} = \frac{\textrm{share of TP}}{\textrm{share of TP} + \textrm{share of NP}},
\quad 
\textrm{R-APR} = \frac{\textrm{share of TP}}{\textrm{share of TP} + \textrm{share of AP}}.
\]
Because both denominators are less than one, APR and R-APR are mechanically no less than ATE. Specifically, APR is estimated at 3.37\% with a UCB of 7.91\%, while R-APR is estimated at 0.84\% with a UCB of 1.98\%. The APR estimate exhibits greater amplification than that of R-APR because the share of Never-Persuadable (NP) voters is significantly smaller than that of Already-Persuaded (AP) voters. This will be confirmed in Panel B of \Cref{tab:combined_results}.

Economically, these measures answer different counterfactual questions. APR considers those who would \textit{not} have been consistent without the debate and asks what fraction would change their behavior if they watched it. In contrast, R-APR considers those who \textit{were} consistent after watching the debate and asks what fraction would have differed without it. The former effect is small, but the latter is even smaller. We note that the reported 95\% confidence intervals are one-sided because the lower bounds are trivially zero, so only the upper bounds require estimation (see \Cref{thm:main}).

Subsequent rows of Panel A of \Cref{tab:combined_results} reveal notable heterogeneity across demographic groups and countries. Younger voters (Age $< 50$) appear more susceptible to persuasion than older voters: the estimated sharp upper bound on ATE for the under-50 group is 1.06\% (UCB 2.45\%), compared to just 0.40\% (UCB 1.36\%) for the over-50 group. Similarly, APR for younger viewers is more than double that of older viewers (4.83\% versus 2.13\%). When disaggregated by country, the results for the U.S.\ and the U.K.\ are qualitatively similar, though the U.S.\ sample shows a slightly higher persuasion rate: APR in the U.S.\ is estimated at 5.09\% (UCB 12.42\%), compared to 3.69\% (UCB 12.63\%) in the U.K. However, the standard errors for these subsamples are larger, indicating less precision than in the pooled analysis.

Panel B of \Cref{tab:combined_results} provides insight into the shares of different types: the vast majority of the population belongs to the AP or NP categories. For the full sample, the estimated proportion of AP types is bounded between 79.88\% and 80.25\%, while the proportion of NP types lies between 19.44\% and 19.75\%. The tight identification region confirms that approximately 80\% of the viewing population was already committed to the outcome prior to the debate. This ``ceiling effect'' is more pronounced in the U.S.\ subsample, where the lower bound on the share of the AP type is 87.30\%, slightly higher than the U.K.'s 83.59\%.

Overall, our findings indicate that TV debates exert, at best, a modest influence on vote choice consistency, echoing LP's results. However, when focusing on APR---which excludes AP voters from the relevant population---the potential scope for debate effects is larger. Even so, this evidence remains suggestive because (i) the identified parameter is merely an upper bound on the true persuasive effect, and (ii) the standard errors associated with the APR estimates are relatively large.

\section{Conclusions}\label{sec:conclusions}

We have revisited LP's findings on the effects of TV debates by using a new framework involving the persuasion rate. Instead of relying on exogenous treatments or other strong identifying assumptions, we have leveraged two weak monotonicity assumptions (MTR and MTS) to derive sharp bounds on various measures of persuasive effects, including the persuasion rate and its reverse version. 
Our analysis has shown that the impact of TV debates on vote choice consistency remains modest, which is consistent with LP's conclusion. In addition, we have highlighted that our bounds continue to serve as sharp bounds on Pearl's probabilities of causation under the same monotonicity assumptions, further underscoring the robustness of our approach. Overall, our results suggest that while TV debates may provide helpful information to voters, their influence on electoral choices is limited.

Finally, we summarize a practical roadmap for empirical implementation, as illustrated in Figure \ref{fig:decision_tree}. We consider a baseline setting with binary outcomes, binary treatments, and observed covariates under the MTR assumption. The first step is to assess data completeness. When outcomes or treatments are missing for some units, the sample selection bounds of \citet{possebom2022probability} are applicable under treatment exogeneity. In the absence of sample selection, the appropriate estimator depends on the treatment assignment mechanism. Under exogenous treatment, persuasion rates are point identified by sample analogs, as shown in Lemma~\ref{lem:benchmark}. When treatment is endogenous, identification requires additional structure, such as instrumental variables \citep{jun2023identifying,yu2023binary} or panel data combined with parallel trends assumptions \citep{jun2024aprt}. If neither valid instruments nor panel data are available, the bounds developed in this paper remain informative under the extra assumption of MTS.

An important direction for future research is to broaden the current framework to settings beyond the reach of existing methods, including cases featuring both treatment endogeneity and sample selection, and to develop identification and inference procedures that remain informative under such combined challenges.

\begin{appendix}  
\section*{Appendix}

\subsection*{Proof of Theorem \ref{thm:main}: }

We first articulate the empirical contents of MTR and MTS, after which we discuss identification of $\Pr\{ \mathrm{NP} \}$, $\Pr\{ \mathrm{TP} \}$, $\Pr\{ \mathrm{AP} \}$, $\theta$, and $\rtheta$.

\subsubsection*{Empirical Contents}

We start from tabulating the joint distribution of $\bigl( Y(0), Y(1), D\bigr)$ as follows: 
\begin{table} [ht]
\begin{tabular}{c|cccc}
$Y(0), Y(1)$   &   $(0,0)$    & $(0,1)$    &   $(1,0)$     &    $(1,1)$   \\
\hline 
$D=0$          &    $q_1$     &  $q_2$     &    $0$        &     $q_3$ \\
$D=1$          &    $q_4$     &  $q_5$     &    $0$        &     $q_6$  \\
\hline 
Prob Restrictions & \multicolumn{4}{c}{$\sum_{j=1}^6 q_j = 1$.}   \\ 
                  & \multicolumn{4}{c}{$0\leq q_j\leq 1$ for $j=1,2,\cdots, 6$.}   \\
\hline 
\end{tabular} 
\end{table}

Here, \Cref{ass:mtr} has already been imposed. \Cref{ass:mts} can be formulated by 
\begin{align} 
    \frac{q_6}{q_4+q_5+q_6}     &\geq  \frac{q_3}{q_1+q_2+q_3} \label{eq:mts0}  \\
    \frac{q_5+q_6}{q_4+q_5+q_6} &\geq \frac{q_2+q_3}{q_1+q_2+q_3}, \label{eq:mts1} 
\end{align} 
where the denominators are not equal to zero by \Cref{ass:overlap}.  \Cref{ass:mtr,ass:mts} do not impose any other restrictions on the $q_j$'s.

For $(y,d) \in \{0,1\}^2$, let $P_{yd} := \Pr(Y = y, D=d) \in (0,1)$, which are directly identified from the data. Then, 
\[
\begin{split}
    q_1 + q_2 &= P_{00},\\
    q_4 &= P_{01},
\end{split}
\qquad \qquad 
\begin{split}
    q_3 &= P_{10}, \\
    q_5+q_6 &= 1- P_{00} - P_{10} - P_{01}.
\end{split}
\]
Therefore, only two of the $q_j$'s are undetermined: i.e., using $q_6 = 1-\sum_{j=1}^5 q_j$, we can write 
\begin{equation} \label{eq:qjs}
q_1 = P_{00} - q_2, \quad 
q_3  = P_{10}, \quad 
q_4 = P_{01}, \quad 
q_6 = P_{11} - q_5,
\end{equation} 
where $0\leq q_2 \leq P_{00}$ and $0\leq q_5\leq P_{11}$.  Using \Cref{eq:qjs}, \Cref{eq:mts0,eq:mts1} can be rewritten as 
\begin{align}  
\frac{P_{11} - q_5}{P_{01}+P_{11}} &\geq \frac{P_{10}}{P_{00}+P_{10}} ,\\
\frac{P_{11}}{P_{01}+P_{11}}   &\geq  \frac{q_2 + P_{10}}{P_{00}+P_{10}}.
\end{align} 
Therefore, the restrictions we have from \Cref{ass:overlap,ass:mtr,ass:mts} are given by \Cref{eq:qjs} with 
\begin{align} 
0\leq q_5 \leq P_{11} - \frac{P_{10}(P_{01}+P_{11})}{P_{00}+P_{10}}, \label{eq:mts0good}\\
0\leq q_2 \leq P_{00} - \frac{P_{01}(P_{10}+ P_{00})}{P_{01}+ P_{11}}. \label{eq:mts1good}
\end{align}

\subsubsection*{Identification of $\Pr\{ \mathrm{NP} \}$, $\Pr\{ \mathrm{TP} \}$, and  $\Pr\{ \mathrm{AP} \}$}

Since
\begin{align*}
\Pr\{ \mathrm{NP} \}  = q_1 + q_4, &\; \Pr\{ \mathrm{TP} \} =q_2 + q_5, \; \Pr\{ \mathrm{AP} \} = q_3 + q_6,
\end{align*}
we have 
\begin{align*}
\Pr\{ \mathrm{NP} \}  = P_{00} + P_{01} - q_2, &\; \Pr\{ \mathrm{TP} \} =q_2 + q_5, \; \Pr\{ \mathrm{AP} \} = P_{10} + P_{11} - q_5. 
\end{align*}
Therefore, the sharp upper (lower) bound on $\Pr\{\mathrm{TP}\}$ can be obtained by maximizing (minimizing) $q_2+q_5$ subject to the constraints in \eqref{eq:mts0good} and \eqref{eq:mts1good}. The other cases are similar. \qed


\subsubsection*{Identification of $\theta$ and $\rtheta$}
Note first that 
\begin{align} 
\theta &= \frac{(q_2+q_3+q_5+q_6) - (q_3+q_6)}{1 - (q_3+q_6)} = \frac{q_2+q_5}{1-P_{10}-P_{11}+q_5},   \label{eq:theta}\\
\rtheta &= \frac{(q_2+q_3+q_5+q_6) - (q_3+q_6)}{q_2+q_3+q_5+q_6} = \frac{q_2+q_5}{q_2+P_{10}+P_{11}}. \label{eq:rtheta}
\end{align} 
Therefore, we can obtain the upper bounds on $\theta$ and $\rtheta$ by maximizing \Cref{eq:theta,eq:rtheta}, respectively, subject to the constraints in \eqref{eq:mts0good} and \eqref{eq:mts1good}; the lower bounds are similar. Below we focus on the upper bound on $\theta$. 


Consider 
\[
\max_{q_2,q_5} \ Q_\theta(q_2,q_5):= \frac{q_2+q_5}{1-P_{10}-P_{11} + q_5} \quad \text{subject to \eqref{eq:mts0good} and \eqref{eq:mts1good}},
\]
for which it suffices to consider 
\[
\max_{q_5} \ Q_\theta\left(P_{00} - \frac{P_{01}(P_{10}+P_{00})}{P_{01}+ P_{11}}, q_5 \right) \quad \text{subject to \eqref{eq:mts0good}}.
\]
Therefore, it follows that the upper bound on $\theta$ is given by 
\begin{equation}\label{eq:maxQtheta} 
Q_\theta\left( P_{00} - \frac{P_{01}(P_{10}+P_{00})}{P_{01}+ P_{11}},\ P_{11} - \frac{P_{10}(P_{01}+P_{11})}{P_{00}+P_{10}}\right).
\end{equation} 
Simplifying \Cref{eq:maxQtheta} yields $\theta_U$. Sharpness follows from continuity. \qed



\subsection*{Proof of Theorem \ref{thm:main:pc}: }
The case of $\mathrm{PNS}$ immediately follows from \Cref{thm:main} because $\mathrm{PNS} = \Pr\{\mathrm{TP}\}$. For $\mathrm{PS}$ and $\mathrm{PN}$, use the $q_j$'s defined in the proof of \Cref{thm:main} to obtain
\begin{align*} 
\mathrm{PS} =  \frac{q_2}{q_1 + q_2} =  \frac{q_2}{P_{00}}
\quad\text{and}\quad
\mathrm{PN} =  \frac{q_5}{q_5 + q_6} = \frac{q_5}{P_{11}}.
\end{align*}
Combine these with the constraints in \eqref{eq:mts0good} and \eqref{eq:mts1good}.
\qed

\end{appendix}
    
\newpage
    
    
    \bibliographystyle{ecta-fullname} 

    {\small \bibliography{persuasion3,casecontrol_SL}}  

\newpage

\begin{figure}[htbp!]
    \centering
    
{\small Panel A: Examples of Vote Choice Consistency ($Y$)}
\par\medskip
\setlength{\tabcolsep}{4pt} 
\begin{tabular}{l ccccc}
    \hline\hline
    Pre-Election Survey & \multicolumn{2}{c}{Survey Choice} & \multicolumn{2}{c}{Potential Outcomes} & Outcome \\
    \cline{2-3} \cline{4-5}
    Timing vs. Debate & Pre-Election & Post-Election & $Y(0)$ & $Y(1)$ & $Y$ \\
    \hline
    \multicolumn{6}{l}{\textit{Scenario 1: MTR holds (Clarification)}} \\
    Before ($D=0$) & Undecided & Cand. A & 0 & -- & 0 \\
    After \phantom{l}($D=1$)  & Cand. A & Cand. A & -- & 1 & 1 \\
    \hline
    \multicolumn{6}{l}{\textit{Scenario 2: MTR violation (Confusion)}} \\
    Before ($D=0$) & Cand. A & Cand. A & 1 & -- & 1 \\
    After \phantom{l}($D=1$)  & Undecided & Cand. A & -- & 0 & 0 \\
    \hline
    \multicolumn{6}{l}{\textit{Scenario 3: Dropout}} \\
    Before ($D=0$) & Cand. A & No Vote & 0 & -- & 0 \\
    After \phantom{l}($D=1$)  & Cand. A & No Vote & -- & 0 & 0 \\
    \hline
    \multicolumn{6}{l}{\textit{Scenario 4: Initially Undecided}} \\
    Before ($D=0$) & Undecided & Any & 0 & -- & 0 \\
    After \phantom{l}($D=1$)  & Undecided & Any & -- & 0 & 0 \\
    \hline\hline
\end{tabular}

\medskip
\begin{minipage}{0.9\linewidth} 
    \raggedright 
    \footnotesize 
    \emph{Notes:} Each row represents a single respondent surveyed twice. The first column indicates whether the pre-election interview occurred before or after the debate, which determines treatment status ($D$). $Y(0)$ is the potential outcome if surveyed before the debate; $Y(1)$ if after. $Y=1$ indicates consistent support for Candidate A (Pre-Election Intent matches Post-Election Vote).
\end{minipage}    
    
    \vspace{1cm} 

    {\small Panel B: Definition of Treatment Variable ($D$) and Sample Window}
    \par\medskip
    \begin{tikzpicture}[
        scale=1.35, 
        axis/.style={->, >=Stealth, thick, black!70},
        node_style/.style={circle, draw, thick, minimum size=0.8cm, inner sep=0pt, font=\small},
        control_node/.style={node_style, draw=gray!80, fill=gray!10},
        treat_node/.style={node_style, draw=black, fill=black!20}
    ]

    \draw[axis] (-4, 0) -- (4, 0) node[right, font=\footnotesize] {Day};

    \foreach \x in {-3, -2, -1} {
        \node[control_node] (d\x) at (\x, 0) {\x};
    }

    \foreach \x in {0, 1, 2, 3} {
        \node[treat_node] (d\x) at (\x, 0) {+\x};
    }

    \node[treat_node, line width=1.5pt] (d0) at (0, 0) {0};

    \node[above=0.4cm of d0, font=\footnotesize] (label_debate) {Debate Day};
    \draw[dotted, thick] (d0.north) -- (label_debate.south);

    
    \draw[decorate, decoration={brace, amplitude=10pt, mirror, raise=5pt}, thick, gray] 
        (d-3.south west) -- (d-1.south east) 
        node[midway, below=20pt, font=\small, text=gray] {Control Group ($D=0$)};

    \draw[decorate, decoration={brace, amplitude=10pt, mirror, raise=5pt}, thick, black] 
        (d0.south west) -- (d3.south east) 
        node[midway, below=20pt, font=\small] {Treatment Group ($D=1$)};

    \draw[decorate, decoration={brace, amplitude=10pt, raise=35pt}, thick, black!60] 
        (d-3.north west) -- (d3.north east) 
        node[midway, above=45pt, font=\small\scshape] {Restricted Sample Window $[-3, +3]$};

    \end{tikzpicture}

    \caption{Data Construction. Panel A illustrates the coding rules for the outcome variable $Y$. Panel B illustrates the sample restriction and treatment assignment relative to the debate timing.}
    \label{fig:combined_methodology}
\end{figure}

\newpage

\begin{table}[!htbp]
  \centering
  \caption{Sharp Bounds on $\theta$, $\theta^{(r)}$, PN, PS, and PNS} 
  \label{tab:bounds_summary}
  \bigskip

  \begin{tabular}{ccccccc}
    \hline\hline
    MTR & MTS & $\theta$ & $\theta^{(r)}$ & PN & PS & PNS \\
    \hline
    No  & No  & $[0,1]$ & $[0,1]$ & $[0,1]$ & $[0,1]$ & $[0, \overline{\textrm{PNS}}_{\text{NA}}]$ \\
    Yes & No  & $[0,1]$ & $[0,1]$ & $[0,1]$ & $[0,1]$ & $[0, \overline{\textrm{PNS}}_{\text{NA}}]$ \\
    No  & Yes & $[0,\overline{\theta}_{\text{MTS}}]$ & $[0,\overline{\theta}^{(r)}_{\text{MTS}}]$ & $[0,\overline{\textrm{PN}}_{\text{MTS}}]$ & $[0,\overline{\textrm{PS}}_{\text{MTS}}]$ & $[0,\overline{\textrm{PNS}}_{\text{MTS}}]$ \\
    Yes & Yes & $[0,\theta_U]$ & $[0,\theta^{(r)}_U]$ & $[0,\theta^{(r)}_U]$ & $[0,\theta_U]$ & $[0, \overline{\textrm{ATE}}]$ \\
    \hline
  \end{tabular}
  
  \medskip
  \begin{minipage}{0.9\textwidth}
    {\small \emph{Notes}: MTR and MTS refer to Monotone Treatment Response and Monotone Treatment Selection, respectively. Rows 1--3 correspond to the cases derived in \Cref{thm:alternative_bounds,thm:alternative_bounds_pearl_tian}. Row 4 corresponds to the main results in \Cref{thm:main,thm:main:pc}. The bounds are defined by:
    \begin{align*}
      \overline{\mathrm{PNS}}_{\text{NA}} &:= \Pr(Y=1,D=1) + \Pr(Y=0,D=0), \\
      \overline{\mathrm{PNS}}_{\text{MTS}} &:= \min\{ \Pr(Y=0\mid D=0),\ \Pr(Y=1\mid D=1) \}, \\
      \overline{\mathrm{PS}}_{\text{MTS}} &:= \min\left\{ \frac{\Pr(Y=1\mid D=1)}{\Pr(Y=0\mid D=0)},\ 1 \right\}, \\
      \overline{\mathrm{PN}}_{\text{MTS}} &:= \min\left\{ \frac{\Pr(Y=0\mid D=0)}{\Pr(Y=1\mid D=1)},\ 1 \right\}, \\
      \overline{\theta}_{\text{MTS}} &:= \min\left\{\frac{\Pr(Y=1\mid D=1)}{\Pr(Y=1,D=1)+\Pr(Y=0,D=0)},\ 1 \right\}, \\
      \overline{\theta}^{(r)}_{\text{MTS}} &:= \min\left\{\frac{\Pr(Y=0\mid D=0)}{\Pr(Y=1,D=1)+\Pr(Y=0,D=0)},\ 1 \right\}, \\
      \overline{\textrm{ATE}} &:= \Pr(Y=1\mid D=1) - \Pr(Y=1\mid D=0), \\ 
          \theta_U &:= \frac{\overline{\textrm{ATE}}}{\Pr(Y=0\mid D=0)}, \\
  \theta_U^{(r)} &:= \frac{\overline{\textrm{ATE}}}{\Pr(Y=1\mid D=1)}.
    \end{align*}
    }
  \end{minipage}
\end{table}

\newpage

\begin{table}[!htbp]
  \centering
  \caption{Estimation Results}
  \label{tab:combined_results}

\bigskip

  Panel A: Upper Bounds on the Persuasion Rates (in Percentage)
  \medskip
  \begin{tabular}{l r rrr rrr rrr}
    \hline\hline
    & & \multicolumn{3}{c}{ATE} & \multicolumn{3}{c}{APR ($\theta$)} & \multicolumn{3}{c}{R-APR ($\rtheta$)} \\
    & $N$ & Est. & SE & UCB & Est. & SE & UCB & Est. & SE & UCB \\ \hline
    All & 42136 & 0.68 & 0.56 & 1.60 & 3.37 & 2.76 & 7.91 & 0.84 & 0.69 & 1.98 \\
    Age $<$ 50 & 18418 & 1.06 & 0.85 & 2.45 & 4.83 & 3.81 & 11.10 & 1.34 & 1.06 & 3.09 \\
    Age $\ge$ 50 & 23718 & 0.40 & 0.58 & 1.36 & 2.13 & 3.09 & 7.22 & 0.49 & 0.71 & 1.66 \\
    U.S. & 12002 & 0.65 & 0.58 & 1.60 & 5.09 & 4.45 & 12.42 & 0.74 & 0.65 & 1.81 \\
    U.K. & 19454 & 0.61 & 0.90 & 2.08 & 3.69 & 5.43 & 12.63 & 0.72 & 1.06 & 2.46 \\
    \hline
  \end{tabular}

  \bigskip

  Panel B: Proportions of AP and NP (in Percentage)
  \medskip
  \begin{tabular}{lcccccccc}
    \hline\hline
    & \multicolumn{4}{c}{Already-Persuaded (AP)} & \multicolumn{4}{c}{Never-Persuadable (NP)} \\
    & CI-LB & LB & UB & CI-UB & CI-LB & LB & UB & CI-UB \\ \hline
    All & 76.93 & 79.88 & 80.25 & 83.19 & 16.41 & 19.44 & 19.75 & 22.71 \\
    Age $<$ 50 & 74.78 & 78.04 & 78.62 & 81.85 & 17.49 & 20.90 & 21.38 & 24.65 \\
    Age $\ge$ 50 & 78.50 & 81.30 & 81.52 & 84.29 & 15.45 & 18.30 & 18.48 & 21.26 \\
    U.S. & 86.06 & 87.30 & 87.65 & 88.83 & 10.72 & 12.06 & 12.35 & 13.55 \\
    U.K. & 81.09 & 83.59 & 83.93 & 86.48 & 12.95 & 15.80 & 16.07 & 18.66 \\
    \hline
  \end{tabular}

  \medskip
  \begin{minipage}{0.9\textwidth}
    {\small \emph{Notes}: Panel A presents the upper bound estimates, standard errors (SE), and 95\% upper confidence bounds (UCB) for ATE, APR, and R-APR. Panel B reports lower and upper bounds (LB and UB, respectively) on the proportions of Already-Persuaded (AP) and Never-Persuadable (NP) types along with 95\% confidence lower and upper bounds (CI-LB and CI-UB, respectively). All figures are in percentages. Standard errors are clustered at the debate level.}
  \end{minipage}
\end{table}

\newpage

\begin{figure}[htbp]
\centering
\usetikzlibrary{shapes.geometric, arrows.meta, positioning, calc}

\tikzset{
    startnode/.style={
        rectangle, 
        rounded corners=15pt, 
        draw=black, 
        thick, 
        fill=gray!30, 
        text width=8cm, 
        minimum height=1cm, 
        text centered, 
        font=\small
    },
    decision/.style={
        rectangle, 
        rounded corners=3pt, 
        draw=black!80, 
        thick, 
        fill=gray!10, 
        text width=4.8cm, 
        minimum height=1.5cm, 
        text centered, 
        font=\small
    },
    outcome/.style={
        rectangle, 
        draw=black, 
        very thick, 
        fill=white, 
        text width=5.5cm, 
        minimum height=1.5cm, 
        text centered, 
        font=\small
    },
    arrow/.style={
        thick, 
        ->, 
        >=stealth, 
        color=black!90
    },
    labeltext/.style={
        font=\footnotesize\itshape, 
        fill=white, 
        inner sep=2pt
    }
}

\begin{tikzpicture}[node distance=1.0cm and 1.5cm]


\node (d_sel) [decision] {Is sample selection a concern?\\ \textit{\footnotesize(i.e., are data missing for some units?)}};

\node (d_exo) [decision, below=of d_sel] {Is the assumption of exogenous treatment reasonable?};
\node (d_iv) [decision, below=of d_exo] {Are credible instrumental variables available?};
\node (d_panel) [decision, below=of d_iv] {Are panel data available?};
\node (d_mts) [decision, below=of d_panel] {Is the MTS assumption reasonable?};

\node (o_possebom) [outcome, right=of d_sel] {Use methods developed in\\
\citet{possebom2022probability}\\ \textit{\footnotesize(requires treatment exogeneity)}};
\node (o_direct) [outcome, right=of d_exo] {Use direct sample analogs\\ (see Lemma~\ref{lem:benchmark})};
\node (o_iv) [outcome, right=of d_iv] {Use methods developed in\\  \citet{jun2023identifying} and \citet{yu2023binary}};
\node (o_panel) [outcome, right=of d_panel] {Use methods developed in\\ \citet{jun2024aprt}\\ \textit{\footnotesize(requires parallel trends and no anticipation)}};
\node (o_bounds) [outcome, right=of d_mts] {Use the bounding method developed in this paper};


\node (start) [startnode, above=0.8cm of $(d_sel.north)!0.5!(o_possebom.north)$] {Prerequisites:\\ Binary Outcome, Binary Treatment, Covariates\\ \& MTR Assumption};


\draw [arrow] (d_sel) -- node[labeltext, midway, above] {Yes} (o_possebom);

\draw [arrow] (d_sel) -- node[labeltext, midway, right] {No} (d_exo);

\draw [arrow] (d_exo) -- node[labeltext, midway, above] {Yes} (o_direct);

\draw [arrow] (d_exo) -- node[labeltext, midway, right] {No} (d_iv);

\draw [arrow] (d_iv) -- node[labeltext, midway, above] {Yes} (o_iv);

\draw [arrow] (d_iv) -- node[labeltext, midway, right] {No} (d_panel);

\draw [arrow] (d_panel) -- node[labeltext, midway, above] {Yes} (o_panel);

\draw [arrow] (d_panel) -- node[labeltext, midway, right] {No} (d_mts);

\draw [arrow] (d_mts) -- node[labeltext, midway, above] {Yes} (o_bounds);

\end{tikzpicture}

\vspace{0.2cm}
\caption{Decision Rule for Estimating Persuasion Rates}
\label{fig:decision_tree}

\vspace{0.2cm}
\begin{minipage}{0.85\textwidth}
\footnotesize
\textit{Notes:} This figure illustrates the recommended hierarchy of econometric methods. All methods assume a setting with binary outcomes, binary treatments, and observed covariates under the Monotone Treatment Response (MTR) assumption. MTS denotes Monotone Treatment Selection.
\end{minipage}

\end{figure}

\begin{appendix}
        \clearpage

        \pagenumbering{roman}
        \setcounter{page}{1}
        
        \renewcommand{\thesection}{S-\arabic{section}}
        \setcounter{section}{0}

        \begin{center}
            \Large{Online Appendices to ``Bounding the Effect of Persuasion with Monotonicity Assumptions: Reassessing the Impact of TV Debates''}
        \end{center}
        \bigskip
        \begin{center} 
            \begin{tabular}{ccc}
                Sung Jae Jun & \qquad & Sokbae Lee  \\
                Penn State University & \qquad  & Columbia University
                \end{tabular}
        \end{center}        
        \bigskip

\section{The Empirical Framework of \citet{TVdebate:23}}\label{appnx:LP:reg}

In the analysis by \citet{TVdebate:23}, the unit of observation is a respondent-debate-election triad, yielding a total of 331,000 observations. Their main regression specification is defined (in their notation) as follows:
\begin{align}\label{eq:reg:tv-debate}
Y_{i t}^d = \sum_{k=-3}^{-1} \mu_k + \sum_{k=1}^3 \mu_k + \mu_{4^{-}} + \mu_{4^{+}} + \sum_{t=-60}^{-1} \beta_t D_t + \alpha^d + W_{i t}^{\prime} \lambda + u_{i t}^d,
\end{align}
where $Y_{i t}^d$ is the outcome for respondent $i$, surveyed $t$ days before the election associated with debate $d$. The term $D_t$ (distinct from the treatment variable in our notation) represents a set of indicator variables controlling for the number of days prior to the election. Here, $W_{i t}$ is a vector of covariates including day-of-the-week effects, post-electoral survey lag fixed effects, and sociodemographic characteristics.

The coefficients $\mu_k$ (for $-3 \leq k \leq 3$) correspond to the effects of indicator variables denoting the number of days relative to the debate. Additionally, $\mu_{4^{-}}$ and $\mu_{4^{+}}$ capture the effects of being surveyed four or more days before or after the debate, respectively, while $\alpha^d$ denotes debate-by-election fixed effects.

The primary coefficients of interest are $\mu_1$, $\mu_2$, and $\mu_3$, which capture the effect of the debate one to three days post-event relative to the omitted category, $\mu_0$. \citet{TVdebate:23} use the day of the debate as the reference category because debates typically occur in the evening. This specification assumes that any pre-trends prior to four days before the debate, or effects persisting beyond four days after, are adequately absorbed by the fixed effects $\mu_{4^{-}}$ and $\mu_{4^{+}}$.

Identification relies on the validity of the linear regression model in \Cref{eq:reg:tv-debate}; specifically, it requires that the timing of the debate be uncorrelated with the regression errors, conditional on the controls. Standard errors are clustered at the debate level.

\section{Proof of \Cref{thm:alternative_bounds}}

\noindent
\textit{Proof of cases (i) and (ii).}
	First, for cases (i) and (ii), it suffices to prove the results in case (ii). Therefore, suppose that MTR holds but MTS may not.   In this case, we have 
	\begin{align} 
		\theta &= \frac{\Pr\{Y(1) = 1\} - \Pr\{ Y(0) = 1\}}{1-\Pr\{Y(0) = 1 \}}, \label{MTR1}\\
		\theta^{(r)} &= \frac{\Pr\{Y(1) = 1\} - \Pr\{ Y(0) = 1\}}{\Pr\{Y(1) = 1 \}}. \label{MTR2}
	\end{align} 
	Here, all we know about the marginals will be 
	\begin{align}
	\Pr(Y=1)
	&= 
	\Pr( Y=1, D=1) + \Pr\{ Y(0) = 1, D=0\} \notag \\
	&\leq
	\Pr( Y=1, D=1) + \Pr\{ Y(1) = 1, D=0\}
	= 
	\Pr\{ Y(1) = 1\} 
	\leq 1,  \label{onlyMTR1}
	\end{align}
	and 
	\begin{align} 
	0
	\leq \Pr\{ Y(0) = 1\} 
	&= \Pr(Y=1, D=0) + \Pr\{ Y(0) = 1, D=1 \} \notag \\
	&\leq \Pr(Y=1, D=0) + \Pr\{ Y = 1, D=1 \} = \Pr(Y=1). \label{onlyMTR2}	
	\end{align} 
	The inequalities in \Cref{onlyMTR1,onlyMTR2} are sharp  under MTR only.
	Therefore, the sharp upper bound on $\theta$ with MTR but without MTS can be obtained by solving 
	\[
	\max_{a,b}\ \frac{a-b}{1-b}\ \text{  subject to  }\ \Pr(Y=1)\leq a\leq 1,\ 0\leq b\leq \Pr(Y=1).
	\]
	The solution is $1$. Similarly, the sharp lower bound on $\theta$ under MTR but no MTS is a solution to 
	\[
	\min_{a,b}\ \frac{a-b}{1-b}\ \text{  subject to  }\ \Pr(Y=1)\leq a\leq 1,\ 0\leq b\leq \Pr(Y=1),
	\]
	which is simply $0$. 
Therefore, the sharp bounds on $\theta$ without MTS (with or without MTR) will be just trivial. The case of $\theta^{(r)}$ is similar: the upper bound is $1$ by solving
	\[
	\max_{a,b}\ \frac{a-b}{a}\ \text{  subject to  }\ \Pr(Y=1)\leq a\leq 1,\ 0\leq b\leq \Pr(Y=1),
	\]
	and the lower bound is $0$ by solving 
	\[
	\min_{a,b}\ \frac{a-b}{a}\ \text{  subject to  }\ \Pr(Y=1)\leq a\leq 1,\ 0\leq b\leq \Pr(Y=1).  \qedhere
	\]

\noindent
\textit{Proof of case (iii).}
    Now, we consider case (iii). Tabulate the PMF of $\bigl(Y(0), Y(1), D\bigr)$ without imposing MTR as follows:  
	\begin{center}
		\begin{tabular}{c|cccc}
			$\bigl( Y(0),Y(1) \bigr)$  &	 $(0,0)$	&	$(0,1)$	&	$(1,0)$	&	$(1,1)$  \\ \hline  
			$D=0$ & 	$q_1$  & $q_2$ 	&	$q_3$	&	$q_4$  \\ 
			$D=1$ &  	$q_5$ & $q_6$ 	&	$q_7$	&	$q_8$		\\ \hline
		\end{tabular} 
	\end{center}
    The MTS assumption can then be expressed by the following inequalities: 
	\begin{align} \label{MTS}
		\frac{q_6+q_8}{q_5+q_6+q_7+q_8} \geq \frac{q_2+q_4}{q_1+q_2+q_3+q_4},
		\quad 
		\frac{q_7+q_8}{q_5+q_6+q_7+q_8} \geq \frac{q_3+q_4}{q_1+q_2+q_3+q_4}
	\end{align} 
	Following the proof of Theorem 1, let $P_{yd} = \Pr(Y=y, D=d) \in (0,1)$ for $(y,d) \in \{0,1\}^2$, and all we can learn from the distribution of $(Y,D)$ can be summarized by the following equations: 
	\begin{align*} 
		&q_1 + q_2 = P_{00}, 
		\quad 
		q_3 + q_4 = P_{10}, 
		\\
		&q_5 + q_7 = P_{01},
		\quad 
		q_6 + q_8 = 1-P_{00}-P_{10}-P_{01}.
	\end{align*} 
	where $\sum_{j=1}^8 q_j = 1$.  Therefore, four of the $q_j$'s are undetermined here: we have 7 free variables and three non-redundant equations. Specifically, treating $q_2, q_4, q_7,q_8$ as undetermined variables, we can write
	\begin{equation}\label{qjs}
	q_1 = P_{00} - q_2,\ 
	q_3 = P_{10} - q_4,\  
	q_5 = P_{01} - q_7,\ 
	q_6 = P_{11} - q_8,
	\end{equation}
	where $0\leq q_2\leq P_{00}, 0\leq q_4 \leq P_{10}, 0\leq q_7\leq P_{01}$, and $0\leq q_8\leq P_{11}$. Then, the inequalities in \Cref{MTS} become 
	\begin{align} 
		\frac{P_{11}}{P_{01}+P_{11}} \geq \frac{q_2 + q_4}{P_{00}+P_{10}}, \quad 
		\frac{q_7+q_8}{P_{01}+P_{11}} \geq \frac{P_{10}}{P_{00}+P_{10}}.
	\end{align} 
	Therefore, all the information we have from MTS can be summarized by \Cref{qjs} and  
	\begin{equation}\label{constraints from MTS only}
    \begin{aligned}     
    &0\leq q_2\leq P_{00},\ 0\leq q_4 \leq P_{10},\  0\leq q_7\leq P_{01},\ 0\leq q_8\leq P_{11}, \\     
    &0 \leq q_2 + q_4 \leq \frac{P_{11}Q_0}{Q_1},\ 
	\frac{P_{10}Q_1}{Q_0} \leq q_7 + q_8 \leq Q_1,
    \end{aligned}
	\end{equation}
	where $Q_d = P_{0d} + P_{1d} = \Pr(D=d)$ for $d \in \{0,1\}$.

    We start with $\theta$ first.  For the sharp lower bound, we look for the minimum value that
    \[
    \theta = \frac{q_2+q_6}{q_1 + q_2 + q_5+q_6} = \frac{q_2 - q_8 + P_{11}}{P_{00}+P_{01}+P_{11}-q_7-q_8}
    \]
    can take subject to the constraints in \eqref{constraints from MTS only}. However, this objective function will be equal to zero if $q_2 = 0, q_8 = P_{11}$, which is clearly feasible: any $q_4$ and $q_7$ that satisfy $0\leq q_4 \leq \min(P_{10}, P_{11}Q_0/Q_1)$ and $\max(0, P_{10}Q_1/Q_0 - P_{11})\leq q_7\leq P_{01}$  will be compatible. Therefore, the sharp lower bound on $\theta$ is equal to zero. 
        For the sharp upper bound, we consider
    \[
    \max\ \frac{q_2+q_6}{q_1 + q_2 + q_5+q_6} = \frac{q_2 - q_8 + P_{11}}{P_{00}+P_{01}+P_{11}-q_7-q_8}
    \]
    subject to the constraints in \eqref{constraints from MTS only}. Noting that $q_2$ and $(q_7,q_8)$ are not cross-constrained, we maximize this with respect to $q_2$ first. Since $q_2$ is constrained by 
    \[
    0\leq q_2\leq P_{00},\ 0\leq q_4\leq P_{10},\ 0\leq q_2+q_4\leq P_{11}Q_0/Q_1,
    \]
    the largest value $q_2$ can take is $\min(P_{00},\ P_{11}Q_0/Q_1)$ when $q_4 = 0$. Therefore, it suffices to maximize 
    \[
    Q_U(q_7,q_8)
	:=
	\frac{P_{11} + \min(P_{00},\ P_{11}Q_0/Q_1)-q_8}{P_{00}+P_{01}+P_{11}-q_7-q_8}
    =
    \frac{\min(P_{11}+P_{00},\ P_{11}/Q_1)-q_8}{P_{00}+P_{01}+P_{11}-q_7-q_8}
    \]
    subject to 
    \[
    0\leq q_7\leq P_{01},\ 0\leq q_8\leq P_{11},\ P_{10}Q_1/Q_0\leq q_7+q_8\leq Q_1.
    \]
    For any fixed value of $q_8$ in the feasible set, $q_7$ must satisfy 
    \[
    \max(0, P_{10}Q_1/Q_0 - q_8) \leq q_7\leq \min(P_{01},Q_1 - q_8) = P_{01},  
    \]
   where the last equality follows from the constraint $0\leq q_8\leq P_{11}$.
    Then, maximizing $Q_U(q_7, q_8)$ with respect to $q_7$ with fixing $q_8$ leads to
    \[
	Q_U(P_{01}, q_8)
	=
    \frac{\min(P_{11}+P_{00}, P_{11}/Q_1)-q_8}{P_{00}+P_{11}-q_8}.
    \]
Therefore, it suffices to solve 
\[
\max\ 	Q_U(P_{01},q_8)= \frac{\min(P_{11}+P_{00}, P_{11}/Q_1)-q_8}{P_{00}+P_{11}-q_8}
\]
subject to 
\[
0\leq q_8\leq P_{11},\ P_{10}Q_1/Q_0 - P_{01}\leq q_8\leq P_{11}.
\]
Since $Q_U(P_{01},q_8)$ is decreasing in $q_8$, the maximum is attained by $q_8 = \max(0, P_{10}Q_1/Q_0 - P_{01})$ with the maximum value 
\[
Q_U\left( P_{01},\max(0, P_{10}Q_1/Q_0 - P_{01})  \right).
\]
Below we show that this is simplified to the expression we have in the theorem statement. 

First, note that
\[
\Pr (Y = 0 \mid D = 0) \leq \Pr (Y = 1 \mid D = 1) 
\Longleftrightarrow 
\Pr (Y = 0 \mid D = 1) \leq \Pr (Y = 1 \mid D = 0), 
\]
which is equivalent to 
\[
P_{00} \leq P_{11}Q_0/Q_1
\Longleftrightarrow 
0 \leq P_{10}Q_1/Q_0 - P_{01}.
\]
Since 
\[
\min(P_{11}+P_{00}, P_{11}/Q_1)
=
P_{11}+ \min(P_{00}, P_{11}Q_0/Q_1), 
\]
we have that
\begin{align*}
&Q_U\left( P_{01},\max(0, P_{10}Q_1/Q_0 - P_{01})  \right) \\
&=
\frac{P_{11}+ \min(P_{00}, P_{11}Q_0/Q_1)-\max(0, P_{10}Q_1/Q_0 - P_{01})}{P_{00}+P_{11}-\max(0, P_{10}Q_1/Q_0 - P_{01})} \\
&=
\begin{cases}
\frac{P_{11}+ P_{00} - P_{10}Q_1/Q_0 - P_{01}}{P_{00}+P_{11}- P_{10}Q_1/Q_0 - P_{01}} = 1  & \text{if  $P_{00} \leq P_{11}Q_0/Q_1$} \\
\frac{P_{11}+ P_{11}Q_0/Q_1}{P_{00}+P_{11}} =  \frac{P_{11}/Q_1}{P_{11}+P_{00}}  &\text{if  $P_{00} > P_{11}Q_0/Q_1$} \\
\end{cases}
\\
&= \one\{ P_{00} \leq P_{11}Q_0/Q_1 \} + \frac{P_{11}/Q_1}{P_{11}+P_{00}}  \one\{ P_{00} > P_{11}Q_0/Q_1 \} \\
&= \one \left\{ 1 \leq \frac{P_{11}/Q_1}{P_{11}+P_{00}}  \right\} + \frac{P_{11}/Q_1}{P_{11}+P_{00}}  \one \left\{ 1 > \frac{P_{11}/Q_1}{P_{11}+P_{00}}  \right\} \\
&= \min \left\{ \frac{P_{11}/Q_1}{P_{11}+P_{00}},\ 1 \right\},
\end{align*}
where the second last equality follows the equivalence that 
\begin{align*} 
P_{00} \leq P_{11}Q_0/Q_1
&\Leftrightarrow
P_{00}Q_1 \leq P_{11} Q_0
\Leftrightarrow
P_{00}Q_1 \leq P_{11} (1-Q_1) 
\\
&\Leftrightarrow
\{ P_{00} + P_{11} \} Q_1 \leq P_{11}
\Leftrightarrow
1 \leq \frac{P_{11}/Q_1}{P_{00} + P_{11}},
\end{align*}
and the last equality is due to the fact that $a\cdot 1(a\leq b) + b\cdot 1(a>b) = \min(a,b)$ for $a, b \in \mathbb{R}$.
In sum, the sharp bounds on $\theta$ under MTS are given by 
\[
0\leq \theta \leq \min\left\{ \frac{\Pr(Y=1\mid D=1)}{\Pr(Y=1,D=1)+\Pr(Y=0,D=0)},\ 1  \right\}.
\]

Now, consider $\theta^{(r)}$.  For the sharp lower bound, we consider minimizing 
\[
\theta^{(r)} = \frac{q_2 + q_6}{q_2+q_4+q_6+q_8} = \frac{q_2 - q_8 +P_{11}}{q_2 + q_4+P_{11}}
\]
subject to the constraints in \eqref{constraints from MTS only}. However, this objective function is zero if $q_2 = 0$ and $q_8 = P_{11}$, which is clearly feasible as in the case of the lower bound on $\theta$. Therefore, the sharp lower bound on $\theta^{(r)}$ is zero. For the upper bound on $\theta^{(r)}$, we consider 
\[
\max\  \frac{q_2 - q_8 +P_{11}}{q_2 + q_4+P_{11}}
\]
subject to the constraints in \eqref{constraints from MTS only}. Noting that $q_8$ and $(q_2,q_4)$ are not cross-constrained, we maximize this with respect to $q_8$ first. Since $q_8$ is constrained by 
\[
0\leq q_7\leq P_{01},\ 0\leq q_8\leq P_{11}, \frac{P_{10}Q_1}{Q_0}\leq q_7+q_8\leq Q_1,
\]
the smallest value $q_8$ can take is $\max(0,P_{10}Q_1/Q_0 - P_{01} )$ when $q_7 = P_{01}$. Therefore, it suffices to maximize
\[
Q_U^{(r)}(q_2,q_4)
:=
\frac{q_2 +P_{11} - \max(0,\ P_{10}Q_1/Q_0 - P_{01} )}{q_2 + q_4+P_{11}}
\]
subject to 
\[
0\leq q_2\leq P_{00},\ 0\leq q_4\leq P_{10},\ 0\leq q_2+q_4\leq P_{11}Q_0/Q_1.
\]
For any $q_2$ in the feasible set, $q_4$ must satisfy 
\[
0\leq q_4\leq \min(P_{10},\ P_{11}Q_0/Q_1-q_2).
\]
Since $Q_U^{(r)}(q_2,q_4)$ is decreasing in $q_4$ for any fixed $q_2$, it suffices to maximize 
\[
	Q_U^{(r)}(q_2,0)
	= 
	\frac{P_{11} - \max(0, P_{10}Q_1/Q_0 - P_{01}) + q_2}{P_{11}+q_2}
\] 
subject to 
\[
0\leq q_2\leq P_{00},\ 0\leq q_2 \leq P_{11}Q_0/Q_1.
\]
Since $Q_U^{(r)}(q_2,0)$ is increasing in $q_2$, the maximum is 
\[
	Q_U^{(r)}\left( \min(P_{00}, P_{11}Q_0/Q_1),0 \right).
\]
Note here that 
\[
P_{00} \leq P_{11}Q_0/Q_1
\Longleftrightarrow 
P_{10}Q_1/Q_0 - P_{01} \geq 0.
\]
Therefore, $Q_U^{(r)}\left( \min(P_{00}, P_{11}Q_0/Q_1),0 \right)$ is simplified to 
\[
Q_U^{(r)}\left( \min(P_{00}, P_{11}Q_0/Q_1),0 \right)
=
\min\left\{ \frac{1-P_{10}/Q_0}{P_{11}+P_{00}},\ 1 \right\}. 
\]
In sum, the sharp bounds on $\theta^{(r)}$ under MTS are given by 
\[
0\leq \theta^{(r)}\leq \min\left\{\frac{\Pr(Y=0\mid D=0)}{\Pr(Y=1,D=1)+\Pr(Y=0,D=0)},\ 1  \right\}.   \qedhere
\]

\section{Proof of \Cref{thm:alternative_bounds_pearl_tian}}

As mentioned in the main text, cases (i) and (ii) are given in \citet[Sections 4.2.1 and 4.4.1]{tian2000probabilities}, respectively. We provide the proof for case (iii) here.

Tabulate the general PMF of $\bigl(Y(0), Y(1), D\bigr)$ as follows:  

\bigskip
\begin{center}
		\begin{tabular}{c|cccc}
			$\bigl( Y(0),Y(1) \bigr)$  &	 $(0,0)$	&	$(0,1)$	&	$(1,0)$	&	$(1,1)$  \\ \hline  
			$D=0$ & 	$q_1$  & $q_2$ 	&	$q_3$	&	$q_4$  \\ 
			$D=1$ &  	$q_5$ & $q_6$ 	&	$q_7$	&	$q_8$		\\ \hline
		\end{tabular} 
\end{center}
\bigskip 

The MTS assumption can then be expressed by the following inequalities: 
	\begin{align} \label{MTS:new}
		\frac{q_6+q_8}{q_5+q_6+q_7+q_8} \geq \frac{q_2+q_4}{q_1+q_2+q_3+q_4},
		\quad 
		\frac{q_7+q_8}{q_5+q_6+q_7+q_8} \geq \frac{q_3+q_4}{q_1+q_2+q_3+q_4}
	\end{align} 
Following the proof of Theorem 1, let $P_{yd} = \Pr(Y=y, D=d) \in (0,1)$ for $(y,d) \in \{0,1\}^2$ and $Q_d = P_{1d} + P_{0d}$. Now, all we can learn from the distribution of $(Y,D)$ can be summarized by the following equations: 
	\begin{align*} 
		&q_1 + q_2 = P_{00}, 
		\quad 
		q_3 + q_4 = P_{10}, 
		\\
		&q_5 + q_7 = P_{01},
		\quad 
		q_6 + q_8 = 1-P_{00}-P_{10}-P_{01}.
	\end{align*} 
where $\sum_{j=1}^8 q_j = 1$.  Therefore, four of the $q_j$'s are undetermined here: we have 7 free variables and three non-redundant equations. Specifically, treating $q_2, q_4, q_7,q_8$ as undetermined variables, all the constraints we have under MTS are given by 
\begin{equation}\label{eq:allconstraints}
\begin{aligned} 
&0\leq q_2 \leq P_{00}, \quad
0\leq q_4\leq P_{10},\quad 
0\leq q_7\leq P_{01},\quad 
0\leq q_8\leq P_{11}, \\
&0\leq q_2 + q_4 \leq \frac{P_{11}Q_0}{Q_1},\quad 
\frac{P_{10}Q_1}{Q_0} \leq q_7+q_8\leq Q_1.
\end{aligned} 
\end{equation}
Now, recall that 
\begin{align*} 
\textrm{PS} = \frac{q_2}{P_{00}},\quad 
\textrm{PN} = \frac{q_6}{P_{11}} = \frac{P_{11}-q_8}{P_{11}},\quad 
\textrm{PNS} = q_2 + q_6 = q_2 - q_8 + P_{11}.
\end{align*} 

Consider PNS first. For the sharp lower bound, we minimize $q_2 + q_6 = P_{11}+q_2 - q_8 \geq 0$ subject to the constraints in \eqref{eq:allconstraints}. However, if 
\[
q_2 = 0, q_4 = 0, q_7 = \max\left\{ 0,\ \frac{P_{10}Q_1}{Q_0} - P_{11}\right\}, q_8 = P_{11}, 
\]
then we have $q_2 - q_8 + P_{11} = 0$. These $q$ values are in the feasible set, because 
\[
   \max\left\{ 0,\ \frac{P_{10}Q_1}{Q_0} - P_{11}\right\} \leq P_{01},\ 
   q_7 + q_8 = \max\{ P_{11},\ \frac{P_{10}Q_1}{Q_0}\} \leq Q_1.
\] 
Therefore, the sharp lower bound on PNS is zero.

For the sharp upper bound on PNS, we need to maximize $q_2+q_6 = P_{11}+q_2-q_8$ subject to the constraints in \eqref{eq:allconstraints}. For any fixed value of $q_8$ in the feasible set, the objective function $P_{11}+q_2-q_8$ is increasing in $q_2$. Therefore, it suffices to consider the largest value $q_2$ can take, which is $q_2 = \min\{ P_{00}, P_{11}Q_0/Q_1\}$ when $q_4 = 0$. Therefore, it suffices to maximize 
\[
P_{11} + \min\{ P_{00},\ P_{11}Q_0/Q_1\} - q_8
\] 
subject to 
\[
0\leq q_7\leq P_{01},\ 0\leq q_8\leq P_{11},\ P_{10}Q_1/Q_0 \leq q_7+q_8\leq Q_1.
\]
This objective function is decreasing in $q_8$, and hence the maximum is attained when $q_8$ takes the smallest value in the feasible set. That is, if $q_8 = \max\{0,\ P_{10}Q_1/Q_0 - P_{01}\}$ and $q_7 = P_{01}$, then we obtain the maximum
\[
\textrm{PNS}^{max}:=P_{11} + \min\{ P_{00},\ P_{11}Q_0/Q_1\} - \max\{0,\ P_{10}Q_1/Q_0 - P_{01}\}
\]
This expression can be simplified because 
\[
P_{00} \leq P_{11}Q_0/Q_1
\Longleftrightarrow
P_{00}P_{01} \leq P_{11}P_{10}
\Longleftrightarrow 
P_{10}Q_1/Q_0 \geq P_{01}.
\]
Therefore, 
\begin{align*}
\textrm{PNS}^{max}
&=
P_{11} + \min\{ P_{00},\ P_{11}Q_0/Q_1\} + \min\{ 0,\ P_{01}-P_{10}Q_1/Q_0 \}
\\
&=
P_{11} + \min\{P_{00}+P_{01}-P_{10}Q_1/Q_0,\ P_{11}Q_0/Q_1  \}
\\
&= 
\min\{P_{11}+P_{00}+P_{01}-P_{10}Q_1/Q_0,\ P_{11} + P_{11}Q_0/Q_1 \}
\\
&=
\min\{1-P_{10}-P_{10}Q_1/Q_0,\ P_{11} + P_{11}Q_0/Q_1 \}
\\
&=
\min\{1-P_{10}/Q_0,\ P_{11}/Q_1 \}.
\end{align*}

In sum, the sharp bounds on PNS under MTS only are given by 
\[
0\leq \textrm{PNS}\leq \min\{\Pr(Y=0\mid D=0), \Pr(Y=1\mid D=1) \}. 
\]

Now, consider PS. For the sharp upper bound on PS, it suffices to solve
\[
\max\ \frac{q_2}{P_{00}} \quad \text{subject to}\quad 0\leq q_2\leq P_{00},\ 0\leq q_4\leq P_{10},\ 0\leq q_2+q_4\leq \frac{P_{11}Q_0}{Q_1},
\]
because the other constraints are irrelevant. The maximum is 
\[
\frac{\min\left\{ P_{00},\ \frac{P_{11}Q_0}{Q_1} \right\}}{P_{00}} = \min\left\{ \frac{P_{11}Q_0}{P_{00}Q_1},\ 1 \right\},
\]
which is attained when 
\[
q_2 = \min\left\{P_{00}, \frac{P_{11}Q_0}{Q_1}\right\},\ 
q_4 = 0.
\]
For the sharp lower bound on PS, we consider 
\[
    \min\ \frac{q_2}{P_{00}} \quad \text{subject to}\quad 0\leq q_2\leq P_{00},\ 0\leq q_4\leq P_{10},\ 0\leq q_2+q_4\leq \frac{P_{11}Q_0}{Q_1},
\]
but this minimum is clearly zero because $q_2 = q_4 = 0$ is in the feasible set. Therefore, we conclude that the sharp bounds on PS under MTS (but without MTR) are given by 
\[
0\leq \textrm{PS}\leq \min\left\{ \frac{\Pr(Y=1\mid D=1)}{\Pr(Y=0\mid D=0)},\ 1 \right\}.
\]

Finally, consider PN. For the sharp upper bound on PN, it suffices to solve 
\[
\max\ \frac{P_{11} - q_8}{P_{11}} \quad \text{subject to}\quad 0\leq q_7\leq P_{01}, 0\leq q_8\leq P_{11}, \frac{P_{10}Q_1}{Q_0}\leq q_7+q_8\leq Q_1,
\]
because the other constraints are irrelevant. The maximum is 
\[
\frac{P_{11} - \max\left\{  0,\ \frac{P_{10}Q_1}{Q_0} - P_{01}\right\}}{P_{11}}
=
\min\left\{\frac{P_{00}Q_1}{P_{11}Q_0},\ 1 \right\},
\]
which is attained when 
\[
q_7 = P_{01},\ q_8 = \max\left\{0, \frac{P_{10}Q_1}{Q_0} - P_{01}  \right\}.
\]
For the lower bound, we solve 
\[
    \min\ \frac{P_{11} - q_8}{P_{11}} \quad \text{subject to}\quad 0\leq q_7\leq P_{01}, 0\leq q_8\leq P_{11}, \frac{P_{10}Q_1}{Q_0}\leq q_7+q_8\leq Q_1,
\]
but this minimum is equal to zero, because $q_7 = 0, q_8 = P_{11}$ is in the feasible set. Therefore, the sharp bounds on PN under MTS are given by 
\[
0\leq \textrm{PN} \leq \min\left\{ \frac{\Pr(Y=0\mid D=0)}{\Pr(Y=1\mid D=1)},\ 1  \right\}. 
\]
This completes the proof.  \qed


\section{Comments on Confidence Intervals}\label{section:comments on inference}        
Confidence intervals in the case of interval identification have a well-known issues such as coverage rates for the parameter value versus the identified interval and their uniform validity: see e.g., \citet{Imbens/Manski:04,Stoye:07,canay2017practical}. Our inference problems (except the cases of PN and PS) are simpler because our situation is univariate in that the sharp lower bound is known to be zero, and it does not need to be estimated.  Below we explain this in more detail. 

To be specific, focus on $\theta$ with the sharp identified interval $[9, \theta_U]$ under MTR and MTS. The proposed confidence interval is 
\[
\mathrm{CI} := \left[ 0, \hat{\theta}_U + c_\alpha\cdot \widehat{\mathrm{SE}}_U\right],
\]
where $\hat{\theta}_U$ is an estimator of $\theta_U$ with the standard error equal to $\widehat{\mathrm{SE}}_U$, and $c_\alpha = \Phi^{-1}(1-\alpha)$ is the one-sided critical value from the standard normal distribution.  

This confidence interval covers the entire identified interval with probability approaching $1-\alpha$, and it covers the true parameter $\theta$ with probability at least $1-\alpha$. To see this point, note that 
\begin{equation}\label{eq:inf1}
\Pr\Bigl( [0,\theta_U]\subset \mathrm{CI} \Bigr)
=
\Pr\Bigl( \theta_U \leq \hat{\theta}_U + c_\alpha \cdot \widehat{\mathrm{SE}}_U\Bigr)
=
\Pr\Bigl( \frac{\hat{\theta}_U - \theta_U}{\widehat{\mathrm{SE}}_U} \geq -c_\alpha \Bigr),
\end{equation}
while letting $\theta = (1-a)\theta_U$ for some $a\in [0,1]$ shows that 
\begin{multline} \label{eq:inf2}
\Pr(\theta \in \mathrm{CI}) 
=
\Pr\Bigl( (1-a)\theta_U \leq \hat{\theta}_U + c_\alpha \widehat{\mathrm{SE}}_U \Bigr) 
\\
=
\Pr\Bigl( \frac{\hat{\theta}_U - \theta_U}{\widehat{\mathrm{SE}}_U} \geq -c_\alpha - a \frac{\theta_U}{\widehat{\mathrm{SE}}_U} \Bigr)
\geq    
\Pr\Bigl( \frac{\hat{\theta}_U - \theta_U}{\widehat{\mathrm{SE}}_U} \geq -c_\alpha \Bigr).
\end{multline}

In fact, if we let $\mathcal{P}:= \{ P:\ 0\leq \theta \leq \theta_U(P) \}$ be the set of data-generating processes under MTR and MTS, then the arguments in \Cref{eq:inf1,eq:inf2} show that for any distribution $P\in \mathcal{P}$, we must have
\begin{align*}
\Pr_P\Bigl( [0,\theta_U(P)]\subset \mathrm{CI} \Bigr) 
&= 
\Pr_P\left\{ \frac{\hat{\theta}_U - \theta_U(P)}{\widehat{\mathrm{SE}}_U} \geq -c_\alpha \right\}, \\
\Pr_P(\theta \in \mathrm{CI}) 
&\geq 
\Pr_P\left\{ \frac{\hat{\theta}_U - \theta_U(P)}{\widehat{\mathrm{SE}}_U} \geq -c_\alpha \right\},
\end{align*}
where $\Pr_P$ emphasizes that $P$ is the data-generating process under consideration. Therefore, the coverage properties of $\mathrm{CI}$ will be all uniformly valid over $P\in \mathcal{P}$ as long as the usual $t$-statistic converges uniformly to the standard normal in that  
\[
\liminf_{n\rightarrow \infty} \inf_{P\in \mathcal{P}} \Pr_P\left\{ \frac{\hat{\theta}_U - \theta_U(P)}{\widehat{\mathrm{SE}}_U}  \geq -c_\alpha \right\} = 1-\alpha.  
\]
\end{appendix}
\end{document}